
\documentclass[journal]{IEEEtran}
%

\usepackage{amssymb}
\usepackage{amsmath,amssymb,amsfonts}
\usepackage{multirow}
\usepackage{graphicx}
\usepackage{textcomp}
\usepackage{amsthm}
\renewcommand{\vec}[1]{\boldsymbol{#1}}
\usepackage{setspace}

\usepackage{graphicx}
\usepackage{algorithm}
\usepackage{algorithmic}
\usepackage{amsmath}
\usepackage{color}
\usepackage{amsfonts}


%

\newtheorem{thm}{Theorem}
\newtheorem{lem}{Lemma}

\newtheorem{defn}{Definition}

\newtheorem{rem}{Remark}
\newtheorem{pro}{Problem}

%

%
\ifCLASSINFOpdf
\else
\fi
\hyphenation{op-tical net-works semi-conduc-tor}

\begin{document}
%
\title{Continuous Activity Maximization in Online Social Networks}

%
%
%

\author{Jianxiong Guo,
	Tiantian Chen,
	Weili Wu,~\IEEEmembership{Member,~IEEE}
	\thanks{J. Guo, T. Chen and W. Wu are with the Department
		of Computer Science, Erik Jonsson School of Engineering and Computer Science, Univerity of Texas at Dallas, Richardson, TX, 75080 USA
		
		E-mail: jianxiong.guo@utdallas.edu}
	\thanks{Manuscript received April 19, 2005; revised August 26, 2015.}}

%
%

\markboth{Journal of \LaTeX\ Class Files,~Vol.~14, No.~8, August~2015}%
{Shell \MakeLowercase{\textit{et al.}}: Bare Demo of IEEEtran.cls for IEEE Journals}
%



\maketitle

\begin{abstract}
Activity maximization is a task of seeking a small subset of users in a given social network that makes the expected total activity benefit maximized. This is a generalization of many real applications. In this paper, we extend activity maximization problem to that under the general marketing strategy $\vec{x}$, which is a $d$-dimensional vector from a lattice space and has probability $h_u(\vec{x})$ to activate a node $u$ as a seed. Based on that, we propose the continuous activity maximization (CAM) problem, where the domain is continuous and the seed set we select conforms to a certain probability distribution. It is a new topic to study the problem about information diffusion under the lattice constraint, thus, we address the problem systematically here. First, we analyze the hardness of CAM and how to compute the objective function of CAM accurately and effectively. We prove this objective function is monotone, but not DR-submodular and not DR-supermodular. Then, we develop a monotone and DR-submodular lower bound and upper bound of CAM, and apply sampling techniques to design three unbiased estimators for CAM, its lower bound and upper bound. Next, adapted from IMM algorithm and sandwich approximation framework, we obtain a data-dependent approximation ratio. This process can be considered as a general method to solve those maximization problem on lattice but not DR-submodular. Last, we conduct experiments on three real-world datasets to evaluate the correctness and effectiveness of our proposed algorithms.
\end{abstract}

\begin{IEEEkeywords}
Activity Maximization, Lattice, DR-submodular, Social Networks, Sampling Techniques, Sandwich Approximation Framework, Approximation Algorithm
\end{IEEEkeywords}

%
\IEEEpeerreviewmaketitle

\section{Introduction}
\IEEEPARstart{T}{he} online social platforms, such as Twitter, WeChat, Facebook and LinkedIn, were developing quickly in recent years, and gradually become a mainstream way to communicate and make friends. More and more people share their what they see and discuss some hot issues at the moment in these platforms. The relationships among the users in these social platforms can be represented by social networks, and information can be spread rapidly through the edges in social networks. Based on that, Influence Maximization (IM) considers the problem: selects a subset of users for an information cascade to maximize the expected follow-up adoptions (influence spread). It is a mathematical generalization of plenty of real scenarios, such as viral marketing, rumor blocking and profit maximization. In the kempe et al.'s seminal work \cite{kempe2003maximizing}, two widely accepted diffusion models were proposed, IC-model (Independent Cascade model) and (LT-model) Linear Threshold model, where IC-model is relied on peer-to-peer communication but LT-model considers the total influence from user's neighbors. Then, they showed the IM problem is NP-hard, and its objective function is monotone and submodular under the IC/LT-model, and simple greedy algorithm can achieve $(1-1/e)$-approximation \cite{nemhauser1978analysis}. In order to solve its efficiency problem, there were lots of scalable IM algorithms proposed, heuristic algorithms \cite{chen2009efficient} \cite{chen2010scalable} \cite{goyal2011simpath} \cite{jung2012irie} and approximate algorithms that improve the Monte Carlo simulations \cite{leskovec2007cost} \cite{borgs2014maximizing} \cite{tang2014influence} \cite{tang2015influence} \cite{guo2019novel} \cite{8850214} \cite{8952599}.

Motivated by IM, more interested and real problems emerged and were studied. Wang et al. \cite{wang2017activity} considered to maximize the expected total activity strength about the target issue in online social networks and proposed activity maximization problem. The activity maximization aims to maximize the total activity strength (activity benefit) associated with those edges between influenced users given a seed set. Different from IM, maximized expected influenced users do not mean that total activity strength is maximized because different edges are associated with different activity strength. In addition, they have proved the objective function of activity maximization is NP-hard, monotone, but not submodular and not supermodular \cite{wang2017activity}, and gave us a sandwich approximation framework to get an approximate solution by approximating its upper bound and lower bound.

Later, Kempe et al. \cite{kempe2015maximizing} considered a more general case that using a marketing strategy instead of the seed set. This marketing strategy is denoted by $\vec{x}=(x_1,x_2,\cdots,x_d)$ where each strategy $j$ takes value $x_j$, and for each node $u$, it will be activated as a seed with probability $h_u(\vec{x})$. Thus, the seed set is not deterministic, but activated probabilistically according to a marketing strategy. In this paper, we consider the activity maximization problem under such general marketing strategy. We propose the continuous activity maximization (CAM), which is to find the optimal marketing strategy $\vec{x}^*$ such that the expected activity benefit can be maximized subject to the budget constraint $|\vec{x}|\leq k$. In the real world, the companies often adopt some non-deterministic marketing strategies, such as discounts, coupons, rewards and propagandas, and the promotion results on different individuals are random and distinct. Therefore, CAM is more realistic and generalized than traditional activity maximization problem.

In this paper, we consider the marketing strategy $\vec{x}$ taken from discretized lattice $\mathcal{X}$ with granularity $t$, and the hardness of CAM is discussed. We show that CAM is NP-hard under the IC/LT-model. Given a marketing strategy $\vec{x}$, computing the expected activity benefit is \#P-hard. Since it is not easy to compute the expected activity benefit with respect to a given marketing strategy $\vec{x}$, we provide an equivalent method to compute it by creating a constructed graph, and running Monte Carlo simulations on this constructed graph. Then, we show that the objective function of CAM problem is monotone, but not DR-submodular and not DR-supermodular. DR-submodularity \cite{soma2015generalization} is the diminishing return property extended from set to lattice. If a function defined on lattice is DR-submodular, a $(1-1/e)$-approximation can be obtained by the simple greedy algorithm. In order to find a valid approximate solution, we construct a lower bound and upper bound that are monotone and DR-submodular. Similarly, we show that maximizing this lower bound and upper bound is NP-hard as well and computing their exact value is \#P-hard under the IC/LT-model. For IM problem, the computational cost of greedy algorithm with Monte Carlo simulations is not acceptable, to our CAM problem, the scalability could be worse than IM because the strategy space is larger and the greedy iterative times should be $k/t$ given a budget $k$ and granularity $t$. Thus, based on reverse influence sampling (RIS) \cite{borgs2014maximizing} \cite{tang2014influence} \cite{tang2015influence}, we obtain unbiased estimators for the CAM problem and its lower bound based on RE-sampling, for its upper bound based on RN-sampling. The adaption of RIS to CAM is determined by the partial coverage of the collection of RE-sampling. From this, we design a general scalable algorithm to solve CAM problem, its upper bound and lower bound adapted from IMM algorithm \cite{tang2015influence} for IM problem. We obtain a data-dependent approximation ratio by combining them with the sandwich approximation framework finally. Summarizing our contributions as follows:
\begin{enumerate}
	\item This is the first to study activity maximization problem under the general marketing strategy (lattice constraint). In this paper, a new problem, named CAM, is proposed and its objective function is proved to be monotone, but not DR-submodular and DR-supermodular.
	\item To estimate the expected activity benefit with respect to marketing strategy $\vec{x}$, it could be done on a constructed graph by Monte Carlo simulations.
	\item We obtain a lower bound and upper bound of CAM, which are monotone and DR-submodular.
	\item We design unbiased estimators for CAM and its lower/upper bound based on RE/RN-sampling. Adapted from IMM algorithm and sandwich approximation framework, a data-dependent approximation ratio can be obtained. It is the first time to consider such problems on lattice constraint.
	\item The effectiveness and correctness of our proposed algorithms are tested and verified by several datasets of real-word social networks.
\end{enumerate}

\textbf{Organization:} Sec. \uppercase\expandafter{\romannumeral2} introduces the related work. Sec. \uppercase\expandafter{\romannumeral3} is dedicated to formulate the problem.. The properties of CAM problem and upper/lower bound are presented in Sec. \uppercase\expandafter{\romannumeral4} and Sec. \uppercase\expandafter{\romannumeral5}. Sec. \uppercase\expandafter{\romannumeral6} is the sampling techniques and algorithm design for CAM. Experiments are presented in Sec. \uppercase\expandafter{\romannumeral7} and \uppercase\expandafter{\romannumeral8} is the conclusion.

\section{Related Work}
Viral marketing was first studied systematically by Domingos Richardson \cite{domingos2001mining} \cite{richardson2002mining}, and they proposed the concept of customers' the value and used markov random fields to model the process of viral marketing. Kempe et al. \cite{kempe2003maximizing} formulated IM to a combinatorial optimization problem, proposed two discrete diffusion model, generalized them to triggering model, and gave us a greedy algorithm with the constant approximation ratio. Chen et al. followed kempe's work, and proved it is \#P-hard to compute the exact influence spread for a given seed set under the IC-model \cite{chen2010scalable} and the LT-model \cite{chen2010scal}. To tackle this problem, Monte Carlo simulations were adopted as a general method, but the running time was too slow to apply to larger real networks. Subsequently, to attempt to improve the low efficiency of Monte Carlo simulations, plenty of researchers made effort, for instance, Leskovec et al. proposed a CELF algorithm \cite{leskovec2007cost} implemented by a lazy forward evaluation, avoiding redundant computation by exploiting its submodularity. Adapted from CELF, CELF++ reduced its time complexity further. Until the emergence of RIS, it opened a new door for us. Brogs et al. \cite{borgs2014maximizing} proposed the concept of reverse influence sampling (RIS) firstly, which is scalable in practice and guaranteed theoretically at the same time. Then, a series of efficient randomized algorithms were arisen, such as TIM/TIM+ \cite{tang2014influence}, IMM \cite{tang2015influence}. They were scalable algorithms to solve the IM problem with $(1-1/e-\varepsilon)$-approximation and can be adapted to other relative problems.

DR-submodular maximization problem on lattice attracted more and more researchers' attention recently. Soma et al. \cite{soma2015generalization} generalized the diminishing return property on the integer lattice firstly and solved submodular cover problem with a bicriteria approximation algorithm. Relied on gradient methods, Hassani et al. \cite{hassani2017gradient} addressed monotone continuous DR-submodular maximization effectively, but assumed that the function is continuous and differentiable. On integer lattice, Soma et al. \cite{soma2018maximizing} studied the problem of maximizing monotone DR-submodular exhaustively, where they designed algorithms with $(1-1/e-\varepsilon)$-approximation under the cardinality, polymatroid and knapsack constraint. Simultaneously, they \cite{soma2017non} considered non-monotone DR-submodular maximization over the integer lattice, and presented a $1/(2+\varepsilon)$-approximate algorithm within polynomial time. Optimal budget allocation was a typical application of the DR-submodular maximization, and was studied systematically \cite{soma2014optimal} \cite{maehara2015budget} \cite{miyauchi2015threshold} \cite{hatano2016adaptive}. To social networks, Chen et al. \cite{chen2018scalable} investigated IM problem over the lattice, whose objective function is monotone and DR-submodular. Following that, we study the activity maximization over lattice, different from IM, our objective function is monotone but not DR-submodular, which is the main contributions of this paper.

\section{Problem Formulation}
In this section, we describe the influence model, some preliminary knowledges, and then formulate the continuous activity maximization problem.

\subsection{Influence Model and Realization}
A social network is represented by a directed graph $G=(V,E)$ where $V$, $|V|=n$, denotes the set of (nodes) users, and $E$, $|E|=m$, denotes the set of directed edges which describe the relationship between users. For each edge $(u,v)\in E$, we say $u$ (resp. $v$) is an incoming neighbor (resp. an outgoing neighbor) of $v$ (resp. $u$). For each node $v\in V$, $N^-(v)$ (resp. $N^+(v)$) denotes the set of incoming neighbors (resp. outgoing neighbors) of node $v$, and $N(v)=N^-(v)\cup N^+(v)$. We adopt the IC-model and LT-model \cite{kempe2003maximizing}, to model the influence diffusion. Given a seed set $S$, the nodes in $S$ are activated and the other nodes are inactive, then the diffusion process repeats, and terminates until these is no new node is activated.

\begin{defn}[IC-model]
	A diffusion probabiltiy $p_{uv}\in(0,1]$ associated with each edge $(u,v)\in E$. For each node $u$ activated first at time step $t-1$, it activates each of its inctive outgoing neighbor $v$ with probability $p_{uv}$ at time step $t$.
\end{defn}
\begin{defn}[LT-model]
	Each edge $(u,v)\in E$ has a weight $b_{uv}$, and each node $v\in V$ has a threshold $\theta_v$ sampled uniformly in $[0,1]$ and $\sum_{u\in N^-(v))}b_{uv}\leq 1$. For each inactive node $v$ at time step $t-1$, it can be activated at time step $t$ if satisfying $\sum_{u\in A_{t-1}\cup N^-(v)}b_{uv}\leq\theta_v$, where $A_{t-1}$ is the set of active nodes at time step $t-1$.
\end{defn}

A realization ${\rm g}=(V,E(\rm g))$ is a subgraph of $G$ with $E({\rm g})\subseteq E$. Each edge in $E({\rm g})$ is live edge, or else it is blocked edge. Under the IC-model, we can decide whether edge $(u,v)$ is live or blocked with probability $p_{uv}$. Let $\Pr[{\rm g}]$ be the probability of ${\rm g}$ sampled from $G$ based on IC-model, that is
\begin{equation}
\Pr[{\rm g}]=\prod_{e\in E({\rm g})}p_e\prod_{e\in E\backslash E({\rm g})}\left(1-p_e\right)
\end{equation}
Under the LT-model, node $v$ chooses at most one of incoming neighbors $u$ from $N^-(v)$ such that edge $(u,v)$ appears in $E({\rm g})$. Thus, for each node $u\in N^-(v)$, $(u,v)$ appears in $E(\rm g)$ with probability $b_{uv}$ exclusively, and there is no incoming edge of $v$ in $E(\rm g)$ with probability $1-\sum_{u\in N^-(v)}b_{uv}$. We define $V'({\rm g})=\{v:\nexists(u,v)\in E({\rm g})\}$ as the node set which has no incoming edge in realization ${\rm g}$. Let $\Pr[{\rm g}]$ be the probability of ${\rm g}$ sampled from $G$ based on LT-model, that is
\begin{equation}
\Pr[{\rm g}]=\prod_{e\in E({\rm g})}b_{uv}\prod_{v\in V'}\bigg(1-\sum_{u\in N^-(v)}b_{uv}\bigg)
\end{equation}
The stochasic diffusion process on $G$ can be considered as deterministic diffusion process on ${\rm g}$ sampled from $G$.

\subsection{Problem Definition}
In the activity maximization problem, there are an activity strength $A_{uv}\in\mathbb{R}_+$ associated with each edge $(u,v)\in E$. $A_{uv}$ means that the benefit or profit between user $u$ and user $v$ if they are both active \cite{wang2017activity}. Given a social graph $G=(V,E)$, an influnece model, and seed set $S$, we define $I(S)$ as the set of activated nodes after the diffusion terminates and $G[I(S)]=(I(S),E[I(S)])$ as the induced subgraph by activated node set $I(S)$, where we have $E[I(S)]=\{(u,v)\in E:u\in I(S)\land v\in I(S)\}$. Given the seed set $S$, the activity function of the activity maximization problem \cite{wang2017activity} is
\begin{equation}
f_d(S)=\mathbb{E}\left[\sum_{(u,v)\in E[I(S)]}A_{uv}\right]
\end{equation}
where $f_d(S)$ is the expected activity benefit of final active nodes for the diffusion starting from $S$. The task of activity maximization is to select at most $k$ seed nodes to maximize the expected activity benefit, i.e., to find $S^*=\arg\max_{S\subseteq V,|S|\leq k}f_d(S)$.

In this paper, we extend the activity maximization problem with general marketing strategy \cite{kempe2015maximizing}, which is a $d$-dimensional vector $\vec{x}=(x_1,x_2,...,x_d)\in\mathbb{R}_+^d$. Each component $x_i$, $i\in[d]=\{1,2,...,d\}$, corresponds to the investment to marketing action $M_i$. Given a marketing strategy $\vec{x}$, the probability that node $u\in V$ is selected as a seed is denoted by strategy function $h_u(\vec{x})$, where $h_u(\vec{x})\in[0,1]$. Thus, different from previous definition, the seed set under the general marketing strategy is stochastic, not deterministic. Given a marketing strategy $\vec{x}$, the probability we select $S\subseteq V$ according to $\vec{x}$ as the seed set is
\begin{equation}
\Pr[S|\vec{x}]=\prod_{u\in S}h_u(\vec{x})\cdot\prod_{v\in V\backslash S}(1-h_v(\vec{x}))
\end{equation}
where $\Pr[S|\vec{x}]$ is the probability that exactly nodes in $S$ are selected as seeds but not in $S$ are not selected as seeds under the marketing strategy $\vec{x}$, which is because each node is select as a seed indetpendently. Then, the activity function now is
\begin{flalign}
	f_c(\vec{x})&=\sum_{S\subseteq V}\Pr[S|\vec{x}]\cdot f_d(S)\\
	&=\sum_{S\subseteq V}f_d(S)\cdot\prod_{u\in S}h_u(\vec{x})\cdot\prod_{v\in V\backslash S}(1-h_v(\vec{x}))
\end{flalign}
\begin{rem}
	We can address marketing strategy $\vec{x}$ in a discretized manner with granularity parameter $t$, where each component $x_i$ takes discretized value $\{0,t,2t,\cdots\}$. These set of vectors is called as lattice $\mathcal{X}$, where $\mathcal{X}=\{0,t,2t,\cdots\}^d$.
\end{rem}
Now, we define the continuous activity maximization (CAM) problem as follows:
\begin{pro}[Continuous Activity Maximization]
	Given a social network $G=(V,E)$ with a influence model, a budget $k$, a marketing strategy functions $h_u(\cdot)$ for each user $u$, CAM aims to find an optimal marketing strategy $\vec{x}$ such that the expected activity benefit can be maximized, that is
	\begin{equation}
	\vec{x}^*=\arg\max_{\vec{x}\in\mathcal{X},|\vec{x}|\leq k}f_c(\vec{x})
	\end{equation}
	where consider the marketing strategy $\vec{x}$ under the budget constraint: $|\vec{x}|=\sum_{i\in[d]}x_i\leq k$. Here, each configuration satisfying $\vec{x}\in\mathcal{X}$ and $|\vec{x}|\leq k$ is called as a feasible solution.
\end{pro}
\noindent
To make the context clear, we refer to the problem that finding $S^*=\arg\max_{S\subseteq V,|S|\leq k}f_d(S)$ as discrete activity maximization (DAM).

\section{Properties of CAM}
In this section, we discuss the hardness, submodularity and approximability of our CAM problem.
\subsection{Hardness}
In order to show the hardness, we can start from a classical NP-hard problem, Set Cover problem, and reduce MC to our CAM problem in polynomial time.

\begin{thm}
	The CAM problem is NP-hard under the IC-model and the LT-model.
\end{thm}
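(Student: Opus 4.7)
My plan is to establish NP-hardness by a polynomial-time reduction from a classical NP-hard covering problem (Set Cover / Maximum Coverage) to CAM under each diffusion model, in the spirit of the Kempe--Kleinberg--Tardos reduction for Influence Maximization. From an instance with ground set $U=\{u_1,\ldots,u_n\}$, collection $\mathcal{S}=\{S_1,\ldots,S_m\}$ and parameter $k$, I would build a directed graph $G$ on the nodes $\mathcal{S}\cup U\cup\{w_1,\ldots,w_n\}$, with edges $(S_i,u_j)$ whenever $u_j\in S_i$ and a \emph{witness} edge $(u_j,w_j)$ for every $j$. I set the activity strength $A_{u_jw_j}=1$ and $A_{uv}=0$ on every other edge, so that each activated element contributes exactly one unit to the total benefit through its private witness. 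On the marketing side, I take $d=m$ dimensions with granularity $t=1$, budget $k$, and strategy functions $h_{S_i}(\vec{x})=\min\{x_i,1\}$ for $i\in[m]$ and $h_v(\vec{x})\equiv 0$ for every $v\notin\mathcal{S}$. Any feasible $\vec{x}$ then deterministically designates the seed set $T(\vec{x})=\{S_i:x_i\ge 1\}$, which has size at most $k$, and coordinates above $1$ are never profitable.

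Under the IC-model with every edge probability $p_{uv}=1$, the diffusion is deterministic, so $f_c(\vec{x})=\bigl|\bigcup_{S_i\in T(\vec{x})}S_i\bigr|$; maximizing $f_c$ therefore coincides with Maximum Coverage with budget $k$. Under the LT-model I would reuse the same graph, activity pattern and strategy design, but because $\sum_u b_{uv}\le 1$ forbids giving multiple covering sets weight $1$ on the same $u_j$, I would replace each edge $(S_i,u_j)$ by a private length-two chain $S_i\to u_j^{(i)}\to w_j^{(i)}$ of LT weight $1$ with $A_{u_j^{(i)}w_j^{(i)}}=1$, and then attach a Kempe-style threshold ``OR'' gadget that caps every element's total contribution at $1$ so that multiply-covered elements are not double counted. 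With this saturation in place, $f_c(\vec{x})$ under LT again equals $\bigl|\bigcup_{S_i\in T(\vec{x})}S_i\bigr|$ with probability $1$, so the same correspondence with Maximum Coverage is recovered.

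The IC reduction is direct; the main obstacle is the LT step, where one must simulate the ``at least one active predecessor'' semantics under $\sum_u b_{uv}\le 1$ and uniform random thresholds while preserving exact equality between activity benefit and number of covered elements. The chain/OR gadget above achieves this with polynomial blow-up, and once in place the same seed-to-cover bijection used under IC delivers the required reduction. Since both the source problem and the construction are polynomial in $n+m$, CAM is NP-hard under the IC- and the LT-model, which is the claim of the theorem.
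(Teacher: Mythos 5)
Your overall strategy differs from the paper's: the paper proves this theorem in one line by restricting CAM to the special case $\mathcal{X}=\{0,1\}^n$ with $h_v(\vec{x})=x_v$, which turns the marketing strategy into the characteristic vector of a seed set, recovers the discrete activity maximization (DAM) problem, and then cites Wang et al.\ for the NP-hardness of DAM under both models. You instead rebuild the reduction from Set Cover from scratch inside CAM. Your IC half is fine: the witness edges $(u_j,w_j)$ with all diffusion probabilities equal to $1$ make the process deterministic, each covered element contributes exactly one unit through its private witness edge, and $f_c(\vec{x})=\bigl|\bigcup_{S_i\in T(\vec{x})}S_i\bigr|$ as claimed; your strategy functions are monotone and DR-submodular, so the instance is legitimate.

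The LT half, however, has a genuine gap. The gadget you invoke --- a ``Kempe-style threshold OR gadget that caps every element's total contribution at $1$'' --- does not exist in the LT model. Because the thresholds $\theta_v$ are drawn uniformly from $[0,1]$ and $\sum_{u\in N^-(v)}b_{uv}\le 1$, a merge node with $\ell$ in-neighbors of weight $1/\ell$ each fires with probability $j/\ell$ when $j$ of them are active, not with probability $\mathbb{I}[j\ge 1]$. So after your private chains, an element covered by several selected sets either gets counted once per covering set (if you keep the chains disjoint) or gets counted with probability proportional to how many times it is covered (if you merge them); in neither case do you recover $\bigl|\bigcup_{S_i\in T(\vec{x})}S_i\bigr|$ exactly, and the correspondence with Maximum Coverage breaks. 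This is precisely why Kempe et al.\ prove LT hardness by a different reduction, from Vertex Cover, exploiting that when $S$ is a vertex cover every node outside $S$ has \emph{all} of its neighbors in $S$ and therefore fires with probability exactly $1$; Wang et al.\ do the analogous thing for DAM. To repair your proof, either replace the LT branch with a Vertex-Cover-style reduction (attaching witness edges of unit activity strength to count activated nodes), or simply do what the paper does and observe that CAM with $\mathcal{X}=\{0,1\}^n$ and $h_v(\vec{x})=x_v$ \emph{is} DAM, whose LT-hardness is already established.
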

\begin{proof}
	We assume that $\mathcal{X}=\{0,1\}^n$ and $h_v(\vec{x})=x_v$, that is, $v$ is selected as a seed if and only if $x_v=1$. Now, marketing strategy $\vec{x}$ is the characteristic vector of the seed set, and CAM problem can be reduced to DAM problem trivally. It has been proved in \cite{wang2017activity} that DAM is NP-hard under the IC-model and LT-model by reducing from the set cover problem. Thus, CAM is more general, and it is NP-hard by inheriting the NP-hardness of DAM.
\end{proof}
It is known that under the IC-model and LT-model, computing influence spread is \#P-hard \cite{chen2010scalable} \cite{chen2010scal}. Given a marketing strategy $\vec{x}$, the hardness of computing $f_c(\vec{x})$, that is
\begin{lem}
	Given a marketing strategy $\vec{x}$, computing $f_c(\vec{x})$ by Equation (5) is \#P-hard.
\end{lem}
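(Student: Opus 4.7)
The plan is to reduce the \#P-hard problem of evaluating the discrete activity function $f_d(S)$ to the problem of evaluating $f_c(\vec{x})$, mirroring the NP-hardness argument used in Theorem~1. Specifically, I would instantiate CAM so that the marketing strategy behaves as a deterministic seed-selection indicator, making $f_c(\vec{x})$ collapse onto a single evaluation of $f_d$.

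First, I would recall (or quickly establish) that computing $f_d(S)$ under the IC-model and LT-model is \#P-hard. This follows by a routine reduction from the \#P-hardness of influence spread computation, proved by Chen et al.\ for both IC and LT. Concretely, by specializing the edge activity strengths (e.g., placing all weight on edges incident to a designated sink, or using uniform $A_{uv}=1$ together with a simple gadget that ties the induced-edge count to the count of activated nodes), any algorithm that computes $f_d(S)$ exactly on arbitrary weighted instances can be used to compute the expected number of activated nodes, which is \#P-hard.

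Next, I would use the same reduction template as in the proof of Theorem~1: take $d=n$, choose $\mathcal{X}=\{0,1\}^n$, and set $h_v(\vec{x})=x_v$. Under this choice, $h_v(\vec{x})\in\{0,1\}$ for every $v$, so the product in equation~(4) is $1$ precisely when $S=S_{\vec{x}}:=\{v:x_v=1\}$ and $0$ otherwise. Plugging into equation~(5) then gives
\begin{equation}
f_c(\vec{x})=\sum_{S\subseteq V}\Pr[S\mid\vec{x}]\cdot f_d(S)=f_d(S_{\vec{x}}).
\end{equation}
Hence an oracle that computes $f_c(\vec{x})$ on this family of instances directly computes $f_d(S_{\vec{x}})$ on an arbitrary input $S\subseteq V$.

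Combining the two pieces, a polynomial-time algorithm for $f_c(\vec{x})$ would yield a polynomial-time algorithm for $f_d(S)$ under IC/LT, contradicting its \#P-hardness; thus evaluating $f_c(\vec{x})$ is \#P-hard. The only genuine obstacle is the first step, i.e., making sure the \#P-hardness of $f_d(S)$ is invoked cleanly; once that is in place, the marketing-strategy specialization is essentially free, and the reduction clearly runs in polynomial time since the instance constructed from a given $S$ has the same underlying graph and weights plus an indicator vector $\vec{x}$ of length $n$.
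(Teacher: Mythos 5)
Your proposal is correct and follows essentially the same route as the paper: specialize to $\mathcal{X}=\{0,1\}^n$ with $h_v(\vec{x})=x_v$ so that $f_c(\vec{x})$ collapses to $f_d(S_{\vec{x}})$, then invoke the \#P-hardness of computing $f_d$, which is inherited from the \#P-hardness of influence-spread computation under IC/LT. If anything, you are slightly more careful than the paper, which asserts that computing $f_d(S)$ ``is equivalent to'' computing $\mathbb{E}[|I(S)|]$ without spelling out the gadget on the activity strengths that you sketch.
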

\begin{proof}
	Similar to the proof of Theorem 1, CAM can be reduced to DAM problem by setting $\mathcal{X}=\{0,1\}^n$ and $h_v(\vec{x})=x_v$. Based on Equation (3), computing $f_d(S)$ is equivalent to compute $\mathbb{E}[I(S)]$, thus, computing $f_d(S)$ is \#P-hard. Except for this special case, the computation of $f_c(S)$ is harder than $f_d(S)$, we hare computing $f_c(S)$ is \#P-hard.
\end{proof}
Monte Carlo simulation can be used to estimate $f_c(\vec{x})$ because it is the expectation of $f_d(\vec{x})$ over the random variable $S$. We need to sample $S$ according to distribution $\vec{x}$.
\begin{lem}
	Provided that we have value oracle that returns the activity benefit $f_d(S)$ given a seed set $S$, we can obtain a $(\gamma,\delta)$-Estimation of $f_c(\vec{x})$ by sampling $S$ according to $\vec{x}$ at least $\frac{\alpha^2\ln(2/\delta)}{2\gamma^2\beta^2}$ times, where $\alpha=\sum_{(u,v)\in E}[A_{uv}]$ and $\beta=\sum_{e\in E}[h_u(\vec{x})h_v(\vec{x})\cdot A_{uv}]$.
\end{lem}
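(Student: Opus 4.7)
The plan is to apply Hoeffding's inequality to the sample mean of the i.i.d.\ random variables $X_i = f_d(S_i)$, where each $S_i$ is drawn independently from the product distribution induced by $\vec{x}$ (i.e.\ each node $u$ is placed in $S_i$ independently with probability $h_u(\vec{x})$). By definition of $f_c$, we have $\mathbb{E}[X_i] = f_c(\vec{x})$, so the sample mean $\hat f = \frac{1}{N}\sum_{i=1}^N X_i$ is an unbiased estimator of $f_c(\vec{x})$.

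First I would bound the range of $X_i$. Since $f_d(S) = \mathbb{E}[\sum_{(u,v)\in E[I(S)]} A_{uv}]$ is upper bounded by summing $A_{uv}$ over all edges, we get $X_i \in [0,\alpha]$ with $\alpha=\sum_{(u,v)\in E}A_{uv}$. Hoeffding's inequality then yields
\begin{equation*}
\Pr\!\left[\,|\hat f - f_c(\vec{x})| \geq \varepsilon\,\right] \;\leq\; 2\exp\!\left(-\frac{2N\varepsilon^{2}}{\alpha^{2}}\right).
\end{equation*}
Setting the right-hand side to $\delta$ gives the absolute-error sample complexity $N \geq \frac{\alpha^{2}\ln(2/\delta)}{2\varepsilon^{2}}$.

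To convert this into a $(\gamma,\delta)$-estimation (relative error $\gamma$), I would use $\beta$ as a tractable lower bound on $f_c(\vec{x})$. The key observation is that whenever an edge $(u,v)$ has both endpoints in the seed set $S$, it automatically contributes $A_{uv}$ to $f_d(S)$, because $u,v \in S \subseteq I(S)$ and hence $(u,v)\in E[I(S)]$ deterministically. Taking expectation over $S\sim\vec{x}$ and using independence of the seed-selection events gives
\begin{equation*}
f_c(\vec{x}) \;=\; \mathbb{E}_S[f_d(S)] \;\geq\; \sum_{(u,v)\in E} \Pr[u\in S,\,v\in S]\cdot A_{uv} \;=\; \sum_{(u,v)\in E} h_u(\vec{x})\,h_v(\vec{x})\,A_{uv} \;=\; \beta.
\end{equation*}
Substituting $\varepsilon = \gamma\beta$ into the Hoeffding bound and invoking $\beta \leq f_c(\vec{x})$ then yields $|\hat f - f_c(\vec{x})| \leq \gamma\beta \leq \gamma f_c(\vec{x})$ with probability at least $1-\delta$, provided $N \geq \frac{\alpha^{2}\ln(2/\delta)}{2\gamma^{2}\beta^{2}}$, which is exactly the stated bound.

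The only delicate step is the lower bound $f_c(\vec{x})\geq\beta$; everything else is a direct application of Hoeffding. The argument above is clean because the edge's contribution from the ``both endpoints are seeds'' event is deterministic and pairwise events are independent under the product distribution induced by $\vec{x}$, so no inclusion-exclusion or diffusion-model-specific reasoning is required. Note also that the bound holds uniformly over IC and LT models since it only uses monotonicity of $f_d$ in the seed set and the range $[0,\alpha]$, not any model-specific propagation probabilities.
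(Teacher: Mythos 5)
Your proposal is correct and follows essentially the same route as the paper: Hoeffding's inequality on i.i.d.\ samples $f_d(S_i)\in[0,\alpha]$, combined with the lower bound $f_c(\vec{x})\geq\beta$ obtained from the event that both endpoints of an edge are selected as seeds. Your write-up is in fact slightly more explicit than the paper's at the one delicate point (justifying $f_c(\vec{x})\geq\beta$ via $u,v\in S\subseteq I(S)$), but the decomposition and the final sample-complexity calculation are the same.
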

\begin{proof}
	According to Equation (5), we can estimate $f_c(\vec{x})$ with the help of Monte Carlo simultions, denoted by $\dot{f}_c(\vec{x})$ and based on Hoeffding's inequality, we have
	\begin{equation*}
		\Pr\left[\left|\dot{f}_c(\vec{x})-f_c(\vec{x})\right|\geq\gamma f_c(\vec{x})\right]\leq 2e^{-\frac{2r\gamma^2(f_c(\vec{x}))^2}{\alpha^2}}
	\end{equation*}
	where $r$ is the number of Monte Carlo simultions and $f_d(S)\in[0,\alpha]$. Then, we consider the lower bound of $f_c(\vec{x})$. For each edge $(u,v)\in E$, the probability of both $u$ and $v$ are active is at least $h_u(\vec{x})h_v(\vec{x})$, thus, we have $f_c(\vec{x})\geq\sum_{e\in E}[h_u(\vec{x})h_v(\vec{x})\cdot A_{uv}]$. Therefore, we can set $r\geq\frac{\alpha^2\ln(2/\delta)}{2\gamma^2\beta^2}$ that establishing $\Pr[|\dot{f}_c(\vec{x})-f_c(\vec{x})|\geq\gamma f_c(\vec{x})]\leq\delta$.
\end{proof}

Unfortunately, it is not easy to compute the activity benefit $f_d(S)$ given a seed set $S$. Thus, we need to address this problem by other techniques. First, we establish an equivalent relationship bewteen $f_d(\cdot)$ and $f_c(\cdot)$. Given a social graph $G=(V,E)$ and a marketing strategy $\vec{x}$, we create a constructed graph $\widetilde{G}=(\widetilde{V},\widetilde{E})$ by adding a new node $\widetilde{u}$ and a new directed edge $(\widetilde{u},u)$ for each node $u$ in $V$ to $G$, where $(\widetilde{u},u)$ is with activation probability $p_{\widetilde{u}u}=h_u(\vec{x})$ in IC-model and weight $b_{\widetilde{u}u}=h_u(\vec{x})$ in LT-model. Then, we can observe that
\begin{equation}
f_c(\vec{x}|G)=f_d(\widetilde{V}-V|\widetilde{G})-\sum_{u\in V}[h_u(\vec{x})\cdot A_{\widetilde{u}u}]
\end{equation}
where $f_c(\vec{x}|G)$ means computing $f_c(\vec{x})$ under the graph $G$. We set the activity strength $A_{\widetilde{u}u}=0$ for each node $u$ in $V$, then we have $f_c(\vec{x}|G)=f_d(\widetilde{V}-V|\widetilde{G})$. Now, we can compute $f_d(\widetilde{V}-V|\widetilde{G})$ instead of $f_c(\vec{x}|G)$ when we are required to get the value of $f_c(\vec{x}|G)$.

\begin{thm}
	Given a social graph $G=(V,E)$ and a marketing strategy $\vec{x}$, the total running time to get a $(\gamma,\delta)$-Estimation of $f_c(\vec{x})$ is $O\left(\frac{(m+n)\alpha^2\ln(2/\delta)}{2\varepsilon^2\beta^2}\right)$, where $\alpha=\sum_{(u,v)\in E}[A_{uv}]$ and $\beta=\sum_{e\in E}[h_u(\vec{x})h_v(\vec{x})\cdot A_{uv}]$.
\end{thm}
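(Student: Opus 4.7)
The plan is to bound the total running time as (number of Monte Carlo trials) $\times$ (cost per trial). Lemma 2 already fixes the first factor: taking $r=\lceil\alpha^2\ln(2/\delta)/(2\gamma^2\beta^2)\rceil$ samples of the seed set $S$ drawn from the distribution induced by $\vec{x}$ suffices to produce a $(\gamma,\delta)$-Estimation, provided we can evaluate $f_d(S)$ on each drawn $S$. Hence what remains is to implement each trial in $O(m+n)$ time.

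For the per-trial cost, I would invoke the equivalence of Equation (8). Setting $A_{\widetilde{u}u}=0$ for every $u\in V$ gives $f_c(\vec{x}|G)=f_d(\widetilde{V}-V\,|\,\widetilde{G})$, so a single Monte Carlo trial can be performed by sampling one realization of $\widetilde{G}$ under the given diffusion model and running the deterministic cascade from the fixed seed set $\widetilde{V}-V$. The random activations of the auxiliary edges $(\widetilde{u},u)$, each with probability $h_u(\vec{x})$, exactly realize the random draw of $S$ according to $\vec{x}$, which sidesteps the need for an exact value oracle (which is \#P-hard by Lemma 1). The simulation itself is a standard BFS/queue traversal on $\widetilde{G}$, which has $|\widetilde{V}|=2n$ vertices and $|\widetilde{E}|=m+n$ edges; every vertex and edge is visited at most once, and summing the activity strengths over the edges of $G[I(\cdot)]$ adds another $O(m)$ term. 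So one trial costs $O(m+n)$ under either IC or LT.

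The one subtlety that needs attention is making sure that, after transferring the computation to $\widetilde{G}$, the parameters feeding Lemma 2 remain valid: the range of the sample values is still $[0,\alpha]$ (because the new edges contribute zero activity), and the lower bound $f_c(\vec{x})\geq\beta$ is unchanged. Hence the same $r$ from Lemma 2 still gives a $(\gamma,\delta)$-Estimation, and multiplying $r$ by the $O(m+n)$ per-trial cost produces the claimed total complexity $O\bigl((m+n)\alpha^2\ln(2/\delta)/(2\varepsilon^2\beta^2)\bigr)$. No step is really hard; the only thing to be careful about is invoking the graph construction before Hoeffding, so that each Monte Carlo trial becomes a linear-time simulation rather than a call to an intractable oracle.
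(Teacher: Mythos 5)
Your proposal is correct and follows essentially the same route as the paper: both transfer the computation to the constructed graph $\widetilde{G}$ via Equation (8), apply the Hoeffding bound with the same range $[0,\alpha]$ and the same lower bound $\beta$ to get the sample count $\frac{\alpha^2\ln(2/\delta)}{2\gamma^2\beta^2}$, and charge $O(m+n)$ per simulation on $\widetilde{G}$. Your extra remark that the random activation of the auxiliary edges $(\widetilde{u},u)$ realizes the draw of $S$ according to $\vec{x}$ (so no \#P-hard oracle is needed) is exactly the point of the paper's Remark 2, just made more explicit.
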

\begin{proof}
	From the Equation (8), we have $f_c(\vec{x}|G)=f_d(\widetilde{V}-V|\widetilde{G})$. According to Equation (3), we can estimate $f_d(\widetilde{V}-V|\widetilde{G})$ by Monte Carlo simulations. Denoted by $S'=\widetilde{V}-V$, and based on Hoeffding's inequality, we have
	\begin{equation*}
		\Pr\left[\left|\dot{f}_d(S')-f_d(S')\right|\geq\gamma f_d(S')\right]\leq 2e^{-\frac{2r\gamma^2(f_d(S'))^2}{\alpha^2}}
	\end{equation*}
	where $r$ is the number of Monte Carlo simultions and $\sum_{(u,v)\in\widetilde{E}[I(S')]}A_{uv}\in[0,\alpha]$. Then, we consider the lower bound of $f_d(S')$. Similar to Lemma 2, we have $f_d(S')\geq\sum_{e\in E}[h_u(\vec{x})h_v(\vec{x})\cdot A_{uv}]$ as well. To achieve a $(\gamma,\delta)$-Estimation of $f_d(S')$, the number of Monte Carlo simultions is at least $\frac{\alpha^2\ln(2/\delta)}{2\gamma^2\beta^2}$. Each Monte Carlo simulation takes $O(m+n)$ running time in constructed graph $\widetilde{G}$. Thus, we have a $(\gamma,\delta)$-Estimation of $f_c(\vec{x}|G)$ in $O\left(\frac{(m+n)\alpha^2\ln(2/\delta)}{2\gamma^2\beta^2}\right)$ running time.
\end{proof}
\begin{rem}
	From the Lemma 2 and Theorem 2, we can know that computing $f_c(\vec{x})$ on $G$ is equivalent to compute $f_d(\widetilde{V}-V)$ on constructed graoh $\widetilde{G}$, which give us an efficient technique to estimate the value $f_c(\vec{x})$ by use of Monte Carlo simulations.
\end{rem}

\subsection{Modularity of Objective Functions}
In order to address CAM problem, a intuitive method is to use the greedy algorithm that can obtain a constant approximation ratio depended on the diminishing return property. We say that A set function $f:2^V\rightarrow\mathbb{R}$ is monotone if $f(S)\leq f(T)$ for all $S\subseteq T\subseteq V$, and submodular if $f(S\cup\{u\})-f(S)\geq f(T\cup\{u\})-f(T)$ for all $S\subseteq T \subseteq V$ and $u\in V\backslash T$. Conversely, if $f(S\cup\{u\})-f(S)\leq f(T\cup\{u\})-f(T)$ for all $S\subseteq T \subseteq V$ and $u\in V\backslash T$, we say $f$ is supermodular. Soma et al. \cite{soma2015generalization} extended the submodularity and the diminishing return property to functions defined on the lattice, that is referred to as the DR-submodular property. To our CAM problem, for two vectors $x,y\in\mathcal{X}$, a function $g:\mathcal{X}\rightarrow\mathbb{R}$ is monotone if $g(\vec{x})\leq g(\vec{y})$ for all $\vec{x}\leq\vec{y}$, and DR-submodular if $g(\vec{x}+t\vec{e}_i)-g(\vec{x})\geq g(\vec{y}+t\vec{e}_i)-g(\vec{y})$ for all $\vec{x}\leq\vec{y}$ and $i\in[d]$. Conversely, if $g(\vec{x}+t\vec{e}_i)-g(\vec{x})\leq g(\vec{y}+t\vec{e}_i)-g(\vec{y})$ for all $\vec{x}\leq\vec{y}$ and $i\in[d]$, we say $g$ is DR-supermodular. Unfortunately, the objective function of CAM problem is not DR-submodular and DR-supermodular.
\begin{rem}
	Here, we assume that the strategy functions $h_u(\vec{x})$ for each $u\in V$ are monotone and DR-submodular. It is because the probability that a user agrees to be a seed increases with more investment and this marginal gain is non-increasing.
\end{rem}
\begin{thm}
	$f_c(\cdot)$ is monotone but not DR-submodular under the IC-model and the LT-model.
\end{thm}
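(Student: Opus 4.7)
The plan for the monotonicity part is to use a standard monotone coupling. I would assign to each node $u \in V$ an independent uniform $[0,1]$ random variable $\xi_u$, and define a random seed set for strategy $\vec{x}$ by $S(\vec{x}) = \{u \in V : \xi_u \leq h_u(\vec{x})\}$. A direct check shows that the marginal distribution of $S(\vec{x})$ under this construction is exactly $\Pr[\,\cdot\mid \vec{x}\,]$ as given in Equation~(4). By the remark preceding the theorem, each $h_u$ is monotone, so whenever $\vec{x}\leq \vec{y}$ we have $S(\vec{x}) \subseteq S(\vec{y})$ pointwise on the underlying probability space. Combining this with monotonicity of $f_d$ on sets — which is immediate from $A_{uv}\geq 0$ plus the fact that enlarging a seed set can only enlarge $I(S)$ and hence the induced edge set $E[I(S)]$ under either diffusion model — I obtain $f_d(S(\vec{x})) \leq f_d(S(\vec{y}))$ almost surely. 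Taking expectations over the coupling yields $f_c(\vec{x}) \leq f_c(\vec{y})$.

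For the negative half, I would reuse exactly the reduction used in the proof of Theorem~1: specialize to $\mathcal{X} = \{0,1\}^n$ with $h_v(\vec{x}) = x_v$. Under this restriction, $f_c$ evaluated on $\{0,1\}$-vectors agrees with $f_d$ viewed through characteristic vectors, and DR-submodularity on the Boolean sublattice collapses to ordinary set-submodularity of $f_d$. Since Wang et al.~\cite{wang2017activity} have already exhibited a concrete instance on which $f_d$ fails to be submodular, transporting that instance through the reduction gives a pair of comparable lattice points and a coordinate at which the DR-submodularity inequality is violated, and the construction is neutral with respect to IC versus LT. If a self-contained witness were desired I would fall back to a small hand-crafted example — for instance a short path or two-edge star with deterministic propagation and carefully tuned activity strengths $A_{uv}$ — and verify the DR inequality with the wrong sign directly from Equations~(5)–(6).

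The monotonicity argument is essentially routine once the coupling is clean, so the main obstacle is more of a bookkeeping nature: the $\xi_u$'s must be independent of, and shared across, the randomness driving the diffusion (edge liveness in IC, thresholds and edge choices in LT), so that on each fixed realization of the diffusion I can invoke monotonicity of $I(\cdot)$ pointwise before averaging. The genuinely delicate part is the non-DR-submodularity claim: crafting a counterexample directly at the lattice level would force me to unwind $\Pr[S\mid \vec{x}]$ for at least four comparable configurations and chase numerical inequalities through $f_d$, which is tedious and error-prone. Using the Boolean specialization sidesteps this completely by inheriting the existing counterexample for DAM, and I expect this to be the cleanest route for the final write-up.
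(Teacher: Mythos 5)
Your proposal matches the paper's proof in its essential approach: the paper also establishes non-DR-submodularity by restricting to the Boolean lattice $\mathcal{X}=\{0,1\}^n$ with $h_v(\vec{x})=x_v$ and writing out a concrete four-node witness (edges $(v_1,v_2),(v_2,v_3),(v_4,v_3)$ with the middle edge given probability $0$, so that seeding $v_4$ first makes the later addition of $v_1$ pick up the ``combination'' edge $(v_2,v_3)$), which is precisely the transported DAM counterexample you describe. The paper never actually argues the monotonicity half, so your coupling via shared uniforms $\xi_u$ is a correct addition rather than a divergence, and the rest of your plan is sound.
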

\begin{proof}
	We prove by a counterexample, consider graph $G=(V,E)$, $V=\{v_1,v_2,v_3,v_4\}$ and $E=\{(v_1,v_2),(v_2,v_3),(v_4,v_3)\}$. By setting $\mathcal{X}=\{0,1\}^4$ and $h_v(\vec{x})=x_v$, we have $h_v(\vec{x})$ is monotone and DR-submodular. The activation probabilities in IC-model and weights in LT-model of $\{(v_1,v_2),(v_4,v_3)\}$ are set to be $1$, but $\{(v_2,v_3)\}$ is $0$. The activity strengths are all set to be $1$. Let $\vec{x}=(0,0,0,0)$ and $\vec{y}=(0,0,0,1)$, we have $f_c(\vec{x})=0$, $f_c(\vec{x}+e_1)=1$, $f_c(\vec{y})=1$ and $f_c(\vec{y}+e_1)=3$. That is $f_c(\vec{x}+e_1)-f_c(\vec{x})<f_c(\vec{y}+e_1)-f_c(\vec{y})$ where $\vec{x}\leq\vec{y}$. Therefore, $f_c(\cdot)$ is not DR-submodular.
\end{proof}

In \cite{wang2017activity}, they explained the reason why $f_d(\cdot)$ is not submodular as the "combination effect" between the new activated node with existing activated node. It can be extended to $f_c(\cdot)$ naturally.

\begin{thm}
	$f_c(\cdot)$ is monotone but not DR-supermodular under the IC-model and the LT-model.
\end{thm}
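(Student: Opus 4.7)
My plan is to separate the two claims, since they require quite different arguments. For the monotonicity of $f_c$ I would use a coupling argument that reduces it to the already-established monotonicity of $f_d$. Given $\vec{x} \leq \vec{y}$ in $\mathcal{X}$, I would draw a single family of independent uniforms $\{U_u\}_{u\in V}$ with $U_u \sim \mathrm{Unif}[0,1]$ and define $S_{\vec{z}} = \{u \in V : U_u \leq h_u(\vec{z})\}$ for any $\vec{z}$. By the assumed monotonicity of each $h_u$ (Remark 4), this coupling forces $S_{\vec{x}} \subseteq S_{\vec{y}}$ almost surely; by the monotonicity of $f_d$ on the set lattice, which is established in \cite{wang2017activity}, we have $f_d(S_{\vec{x}}) \leq f_d(S_{\vec{y}})$ almost surely. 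Since the marginal law of $S_{\vec{z}}$ under this coupling is exactly the product distribution $\Pr[\,\cdot \mid \vec{z}\,]$ of Equations (4)--(5), taking expectations recovers $f_c(\vec{x}) = \mathbb{E}[f_d(S_{\vec{x}})] \leq \mathbb{E}[f_d(S_{\vec{y}})] = f_c(\vec{y})$.

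For the failure of DR-supermodularity, I would exhibit a small counterexample that exploits the natural obstruction, namely the saturation effect of a redundant seed. Consider the three-node path $V = \{v_1, v_2, v_3\}$ with $E = \{(v_1, v_2), (v_2, v_3)\}$, all IC-probabilities and LT-weights set to $1$, all activity strengths $A_{uv} = 1$, lattice $\mathcal{X} = \{0,1\}^3$, and $h_{v_i}(\vec{x}) = x_i$. I would then compare the marginal effect of coordinate $\vec{e}_2$ at $\vec{x} = (0,0,0)$ and at $\vec{y} = (1,0,0)$. At $\vec{x}$, seeding only $v_2$ activates $v_3$ and produces the induced edge $(v_2, v_3)$, giving marginal gain $1$. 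At $\vec{y}$, the seed $v_1$ already cascades along the entire path, so both induced edges are present with $f_c(\vec{y}) = 2$, and adding $v_2$ as an extra seed changes nothing, giving marginal gain $0$. Hence $f_c(\vec{x} + \vec{e}_2) - f_c(\vec{x}) = 1 > 0 = f_c(\vec{y} + \vec{e}_2) - f_c(\vec{y})$ while $\vec{x} \leq \vec{y}$, contradicting DR-supermodularity. Because every in-neighborhood has weight-sum $1$, the LT constraint $\sum_{u \in N^-(v)} b_{uv} \leq 1$ is satisfied, and with weights $1$ the LT-diffusion is deterministic and coincides with the IC-diffusion on this instance, so the same three-node example handles both models simultaneously.

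The only real judgment call is choosing a configuration that yields a strict gap; the redundant-seed pattern above is the cleanest realization, and, in analogy with the ``combination effect'' invoked for the failure of DR-submodularity in Theorem 3, I do not expect any further technical obstacle. The monotonicity half goes through via the standard monotone coupling without fuss, so the main body of the write-up would just be the verification of the four $f_c$ values above.
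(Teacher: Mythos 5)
Your proposal is correct and takes essentially the same route as the paper: the paper also disproves DR-supermodularity by a small explicit counterexample with $\mathcal{X}=\{0,1\}^d$, $h_v(\vec{x})=x_v$, and all probabilities and strengths equal to $1$ (a four-node graph with marginals $3$ versus $2$, exploiting the same redundant-seed saturation as your three-node path with marginals $1$ versus $0$). Your coupling argument for monotonicity is a welcome addition, since the paper's proof only exhibits the counterexample and leaves the monotonicity claim unargued.
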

\begin{proof}
	We prove by a counterexample, consider graph $G=(V,E)$, $V=\{v_1,v_2,v_3,v_4\}$ and $E=\{(v_2,v_1),(v_2,v_3),(v_3,v_4)\}$. By setting $\mathcal{X}=\{0,1\}^4$ and $h_v(\vec{x})=x_v$, we have $h_v(\vec{x})$ is monotone and DR-submodular. The activation probabilities in IC-model, weights in LT-model and activity strengths are all set to be $1$. Let $\vec{x}=(0,0,0,0)$ and $\vec{y}=(0,0,1,0)$, we have $f_c(\vec{x})=0$, $f_c(\vec{x}+e_2)=3$, $f_c(\vec{y})=1$ and $f_c(\vec{y}+e_2)=3$. That is $f_c(\vec{x}+e_2)-f_c(\vec{x})>f_c(\vec{y}+e_2)-f_c(\vec{y})$ where $\vec{x}\leq\vec{y}$. Therefore, $f_c(\cdot)$ is not DR-supermodular.
\end{proof}

\section{Upper and Lower Bound}
In this section, we design an upper bound and a lower bound for our objective function $f_c(\cdot)$, and discuss the properties of them.
\subsection{Bounds Definition}
According to the activity function of CAM problem, Eqaution (5), in order to get an upper bound and a lower bound of $f_c(\cdot)$, we firstly need to get both bounds of DAM problem $f_d(\cdot)$. Wang et al. \cite{wang2017activity} pointed out that the non-submodularity of $f_d(\cdot)$ is derived from the "combination effect". Thus, for a lower bound, only those edges whose two endpoints are influenced by the cascade from the same seed node. we denote by $\underline{f_d}$ the lower bound of $f_d$, that is
\begin{equation}
\underline{f_d}(S)=\mathbb{E}\left[\sum_{(u,v)\in\bigcup_{x\in S}E[I(x)]}A_{uv}\right]
\end{equation}
where $E[I(x)]$ is the edges of induced subgraph by activated node set $I(x)$. Given a seed set $S$, we have $\underline{f_d}(S)\leq f_d(S)$ because it neglects those edges whose endpoints can not be activated by the different seed nodes. Then, we denote by $\overline{f_d}$ the upper bound of $f_d$, that is
\begin{equation}
\overline{f_d}(S)=\mathbb{E}\left[\sum_{u\in V[I(S)]}\sum_{v\in N(u)}\frac{A_{uv}}{2}\right]
\end{equation}
where $V[I(S)]$ is the nodes of induced subgraph by activated node set $I(S)$. Given a seed set $S$, we have $\overline{f_d}(S)\geq f_d(S)$ because we considers each active node contributes to half of activity strength associated to those edges connected to it. Thus, for each edge, it is not mandatory to require both of its endpoints are activated.

According to the above bounds of $f_d$, we can obtain the upper bound and lower bound of the activity function of CAM problem by the same way. From Equation (6), we denote by $\underline{f_c}$ the lower bound of $f_c$, that is
\begin{equation}
\underline{f_c}(\vec{x})=\sum_{S\subseteq V}\underline{f_d}(S)\cdot\prod_{u\in S}h_u(\vec{x})\cdot\prod_{v\in V\backslash S}(1-h_v(\vec{x}))
\end{equation}
denote by $\overline{f_c}$ the upper bound of $f_c$, that is
\begin{equation}
\overline{f_c}(\vec{x})=\sum_{S\subseteq V}\overline{f_d}(S)\cdot\prod_{u\in S}h_u(\vec{x})\cdot\prod_{v\in V\backslash S}(1-h_v(\vec{x}))
\end{equation}
Given a marketing strategy $\vec{x}$, we have $\underline{f_c}(\vec{x})\leq f_c(\vec{x}) \leq\overline{f_c}(\vec{x})$ because $\underline{f_c}(\vec{x})$ (resp. $\overline{f_c}(\vec{x})$) is the linear combination of $\underline{f_d}(S)$ (resp. $\overline{f_d}(S)$). Thus, we can conclude that $\underline{f_d}(S)\leq f_d(S) \leq\overline{f_d}(S)$ means $\underline{f_c}(S)\leq f_c(S) \leq\overline{f_c}(S)$.

\subsection{Properties of the Bounds}
Lu et al. \cite{lu2015competition} provided us with a idea where we can obtain an approximate solution of CAM problem by maximizing its the upper bound and lower bound. As we know, by setting $\mathcal{X}=\{0,1\}^n$ and $h_v(\vec{x})=x_v$, the CAM problem can be reduced to DAM problem. Similarly, maximizing the $\underline{f_c}(\vec{x})$ (resp. $\overline{f_c}(\vec{x})$) can also be reduced maximixing the $\underline{f_d}(S)$ (resp. $\overline{f_d}(S)$) under this special case, which inherits its NP-hardness. Because of maximizing the $\underline{f_d}(\cdot)$ and $\overline{f_d}(\cdot)$ is NP-hard \cite{wang2017activity}, it is natural to have
\begin{thm}
	Maximizing the lower bound $\underline{f_c}(\cdot)$ is NP-hard under the IC-model and the LT-model.
\end{thm}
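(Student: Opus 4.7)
My plan is to recycle the reduction template that already powered Theorem~1: restrict the input to the special case $\mathcal{X}=\{0,1\}^n$ together with the strategy functions $h_v(\vec{x})=x_v$. Under this restriction, every feasible marketing strategy $\vec{x}$ is precisely the indicator vector of some seed set $S_{\vec{x}}\subseteq V$ of size at most $k$, and the product $\prod_{u\in S}h_u(\vec{x})\cdot\prod_{v\in V\setminus S}(1-h_v(\vec{x}))$ appearing in the definition of $\underline{f_c}$ equals $1$ when $S=S_{\vec{x}}$ and $0$ otherwise. Substituting this into Equation~(11) collapses the outer sum to a single term, giving $\underline{f_c}(\vec{x})=\underline{f_d}(S_{\vec{x}})$, and the budget constraint $|\vec{x}|\le k$ translates directly into $|S_{\vec{x}}|\le k$.

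With that identification in hand, the continuous problem $\max_{\vec{x}\in\mathcal{X},\,|\vec{x}|\le k}\underline{f_c}(\vec{x})$ coincides, on this restricted class of inputs, with the discrete lower-bound maximization $\max_{S\subseteq V,\,|S|\le k}\underline{f_d}(S)$ studied in \cite{wang2017activity}. I would then invoke the result from that paper which shows, via a reduction from Set Cover, that maximizing $\underline{f_d}$ is NP-hard under both the IC-model and the LT-model. Since a polynomial-time algorithm for the continuous problem would in particular solve this restricted discrete instance, the NP-hardness is inherited verbatim.

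I do not expect any real obstacle here; the proof is essentially a one-paragraph hardness-by-restriction argument, structurally identical to Theorem~1. The only bookkeeping point worth checking is that the convex-combination template defining $\underline{f_c}$ collapses correctly when each $h_v$ is $0/1$-valued, which is immediate because $\underline{f_c}$ is built from $\underline{f_d}$ in exactly the same way $f_c$ is built from $f_d$. Accordingly I would present the proof as a short paragraph that makes the reduction explicit and cites \cite{wang2017activity} for the NP-hardness of the discrete lower-bound problem.
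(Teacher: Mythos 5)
Your proposal matches the paper's argument exactly: the paper also restricts to $\mathcal{X}=\{0,1\}^n$ with $h_v(\vec{x})=x_v$, observes that maximizing $\underline{f_c}$ then coincides with maximizing $\underline{f_d}$, and cites \cite{wang2017activity} for the NP-hardness of the latter under both models. Your extra remark that the convex-combination in Equation~(11) collapses to the single term $\underline{f_d}(S_{\vec{x}})$ is a correct (and slightly more explicit) justification of the step the paper treats as immediate.
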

\begin{thm}
	Maximizing the upper bound $\overline{f_c}(\cdot)$ is NP-hard under the IC-model and the LT-model.
\end{thm}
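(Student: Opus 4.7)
The plan is to mirror the proof of Theorem 5 (and ultimately the reduction technique used for Theorem 1) by restricting CAM to a special case that coincides with DAM, so that NP-hardness of maximizing $\overline{f_d}(\cdot)$ in Wang et al. transfers directly to maximizing $\overline{f_c}(\cdot)$.

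First, I would specialize the lattice and the strategy functions by setting $\mathcal{X}=\{0,1\}^n$ and $h_v(\vec{x})=x_v$ for every $v\in V$, exactly as in the proofs of Theorem 1 and Lemma 1. Under this choice, the marketing vector $\vec{x}$ is the characteristic vector of a seed set $S=\{v:x_v=1\}$, the budget constraint $|\vec{x}|\leq k$ becomes $|S|\leq k$, and the product $\prod_{u\in S}h_u(\vec{x})\cdot\prod_{v\in V\setminus S}(1-h_v(\vec{x}))$ collapses to $1$ for the seed set $S$ encoded by $\vec{x}$ and to $0$ for every other subset of $V$.

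Next, I would substitute this observation into the definition
\begin{equation*}
\overline{f_c}(\vec{x})=\sum_{T\subseteq V}\overline{f_d}(T)\cdot\prod_{u\in T}h_u(\vec{x})\cdot\prod_{v\in V\setminus T}(1-h_v(\vec{x}))
\end{equation*}
so that the sum reduces to the single surviving term $\overline{f_d}(S)$. Thus, in this special case, maximizing $\overline{f_c}(\vec{x})$ over $\{\vec{x}\in\mathcal{X}:|\vec{x}|\leq k\}$ is identical to maximizing $\overline{f_d}(S)$ over $\{S\subseteq V:|S|\leq k\}$, which is the upper-bound variant of DAM.

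Finally, I would invoke the result of Wang et al.\ \cite{wang2017activity}, which establishes that maximizing $\overline{f_d}(\cdot)$ under the IC-model and the LT-model is NP-hard (via a reduction from Set Cover). Since the restricted instance of CAM described above is polynomial-time reducible to, and equivalent with, this NP-hard problem, maximizing $\overline{f_c}(\cdot)$ inherits the NP-hardness. I do not anticipate a serious obstacle here: the only subtlety is confirming that the restriction $h_v(\vec{x})=x_v$ is admissible under the monotone and DR-submodular assumption on the strategy functions (Remark 4), which it clearly is, since $x_v$ is linear and hence both monotone and DR-submodular on $\{0,1\}^n$.
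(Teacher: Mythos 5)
Your proposal is correct and follows essentially the same route as the paper: the paper also argues that setting $\mathcal{X}=\{0,1\}^n$ and $h_v(\vec{x})=x_v$ collapses $\overline{f_c}(\vec{x})$ to $\overline{f_d}(S)$ for the seed set encoded by $\vec{x}$, so that maximizing the upper bound of CAM inherits the NP-hardness of maximizing $\overline{f_d}(\cdot)$ established in Wang et al. Your writeup is in fact more explicit than the paper's one-line justification, but there is no substantive difference in approach.
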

Even though that, the lower bound $\underline{f_d}(\cdot)$ and upper bound $\overline{f_d}(\cdot)$ of DAM is submodular.
\begin{lem}[\cite{wang2017activity}]
	The lower bound $\underline{f_d}(\cdot)$ is monotone and submodular, but computing it given a seed set $S$ is \#P-hard under the IC-model and the LT-model.
\end{lem}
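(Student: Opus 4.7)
The plan is to decompose $\underline{f_d}(S)$ as an expectation over live-edge realizations and then establish each of the three claims (monotonicity, submodularity, \#P-hardness) via a different tool.

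First, write $\underline{f_d}(S) = \sum_{\mathrm{g}} \Pr[\mathrm{g}]\cdot \underline{f_d}^{\mathrm{g}}(S)$, where $\underline{f_d}^{\mathrm{g}}(S)=\sum_{(u,v)\in \bigcup_{x\in S} E_{\mathrm{g}}[I(x)]} A_{uv}$ is the deterministic lower bound on a fixed realization $\mathrm{g}$. Under any $\mathrm{g}$, the set $I_{\mathrm{g}}(x)$ of nodes reachable from $x$ in $\mathrm{g}$ is deterministic, and so is $E_{\mathrm{g}}[I(x)]$. This decomposition reduces both the structural and computational claims to analysis of $\underline{f_d}^{\mathrm{g}}$ plus an averaging step, exactly as in the standard submodular/\#P-hard arguments for influence spread.

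For monotonicity, I would argue directly: if $S\subseteq T$ then $\bigcup_{x\in S}E_{\mathrm{g}}[I(x)]\subseteq\bigcup_{x\in T}E_{\mathrm{g}}[I(x)]$ on every realization, and since $A_{uv}\ge 0$ we obtain $\underline{f_d}^{\mathrm{g}}(S)\le\underline{f_d}^{\mathrm{g}}(T)$; taking expectation preserves the inequality. For submodularity, the key observation is that $\underline{f_d}^{\mathrm{g}}(\cdot)$ is a weighted coverage function on the ground set of edges in $\mathrm{g}$, where each ``element'' $x\in V$ covers the edge subset $E_{\mathrm{g}}[I(x)]$ with weight $A_{uv}$. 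Weighted coverage functions are classically monotone and submodular, so $\underline{f_d}^{\mathrm{g}}$ is submodular for every $\mathrm{g}$, and a non-negative convex combination of submodular functions remains submodular; hence $\underline{f_d}$ is submodular.

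For the \#P-hardness of evaluating $\underline{f_d}(S)$, the plan is a reduction from the known \#P-hard problem of computing $\mathbb{E}[|I(\{v\})|]$ under the IC-model (Chen et al.) and under the LT-model. Given such an instance on $G$, set $A_{uv}=1$ on every edge and evaluate $\underline{f_d}(\{v\})=\mathbb{E}\bigl[|E_{\mathrm{g}}[I(v)]|\bigr]$. The main obstacle is that this yields an expected number of \emph{internal edges}, not of activated nodes. To bridge the gap, I would augment $G$ by attaching, for each $u\in V$, a fresh leaf $u'$ with an edge $(u,u')$ of activation probability (resp.\ LT-weight) $1$ and activity strength $A_{uu'}=1$, and zero out the strengths on the original edges. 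In the augmented graph, every node activated by the cascade from $v$ contributes exactly one unit to $\underline{f_d}(\{v\})$, so $\underline{f_d}(\{v\}) = \mathbb{E}[|I(\{v\})|]$ on the original instance, completing the polynomial-time Turing reduction and therefore the \#P-hardness. The delicate point in this last step is arguing that the augmentation does not alter the activation distribution of the original nodes, which holds because the new leaves are sinks and do not feed back into $G$.
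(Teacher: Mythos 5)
Your proposal is essentially correct, but note that the paper does not prove this lemma at all: it is imported verbatim from Wang et al.\ \cite{wang2017activity} as a cited result, so there is no in-paper argument to compare against. Your three-part argument (live-edge decomposition, per-realization weighted coverage function, reduction from influence-spread computation via unit-probability leaf gadgets with zeroed-out original strengths) is the standard route and is sound for both models: the union structure $\bigcup_{x\in S}E[I(x)]$ is exactly what makes the per-realization function a coverage function, expectation preserves monotonicity and submodularity, and the leaf gadget correctly converts node activation counts into edge-based activity benefit without perturbing the original diffusion (leaves are sinks, and under LT the added edge respects the incoming-weight constraint at the leaf). One small caution: in the paper's notation $E[I(x)]$ denotes the edges of the \emph{original} graph $G$ induced by the activated set $I(x)$, not the live edges of the realization $\mathrm{g}$; your phrase ``ground set of edges in $\mathrm{g}$'' should be read as the ground set $E$ with membership of $(u,v)$ determined by reachability of both $u$ and $v$ in $\mathrm{g}$, otherwise the decomposition $\underline{f_d}(S)=\sum_{\mathrm{g}}\Pr[\mathrm{g}]\,\underline{f_d}^{\mathrm{g}}(S)$ would compute a different quantity. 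The coverage-function and hardness arguments go through unchanged under the correct reading.
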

\begin{lem}[\cite{wang2017activity}]
	The upper bound $\overline{f_d}(\cdot)$ is monotone and submodular, but computing it given a seed set $S$ is \#P-hard under the IC-model and the LT-model.
\end{lem}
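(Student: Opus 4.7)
The plan is to reuse the realization machinery already set up for the IC and LT models, and rewrite $\overline{f_d}(S)$ as an expectation over realizations $\mathrm{g}\sim G$ of a deterministic function in $S$. Letting $I(S|\mathrm{g})$ denote the set of nodes reachable from $S$ in $\mathrm{g}$ and defining the fixed non-negative node weights $w_u=\tfrac{1}{2}\sum_{v\in N(u)}A_{uv}$, I would write
\begin{equation*}
\overline{f_d}(S)\;=\;\sum_{\mathrm{g}}\Pr[\mathrm{g}]\cdot\sum_{u\in I(S|\mathrm{g})}w_u,
\end{equation*}
since conditioned on $\mathrm{g}$ the set $V[I(S)]$ is exactly $I(S|\mathrm{g})$. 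This reduces every claim about $\overline{f_d}$ to a claim about the weighted reachability function on a fixed directed graph, averaged over $\mathrm{g}$.

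For monotonicity, I would argue pointwise in $\mathrm{g}$: if $S\subseteq T$ then $I(S|\mathrm{g})\subseteq I(T|\mathrm{g})$ by a straightforward induction on the diffusion steps inside $\mathrm{g}$, so the inner $w$-weighted sum is non-decreasing, and a convex combination over $\mathrm{g}$ preserves the inequality. For submodularity, I would again fix $\mathrm{g}$ and use the standard coverage argument: for any $S\subseteq T$ and $u\notin T$,
\begin{equation*}
I(S\cup\{u\}|\mathrm{g})\setminus I(S|\mathrm{g})\;\supseteq\;I(T\cup\{u\}|\mathrm{g})\setminus I(T|\mathrm{g}),
\end{equation*}
because anything that $u$ newly reaches in the larger seed set is already reached in the smaller one or becomes a new reach there. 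Weighting by the non-negative $w_u$'s keeps the set inclusion as a numerical inequality of marginals, and averaging over $\mathrm{g}$ transfers submodularity from the deterministic per-realization function to $\overline{f_d}$. Both steps work verbatim in the IC and LT realization models because the only property used is that diffusion on $\mathrm{g}$ coincides with reachability in $\mathrm{g}$.

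For \#P-hardness, the plan is a Turing reduction from exact influence spread computation, which is known to be \#P-hard under both IC \cite{chen2010scalable} and LT \cite{chen2010scal}. Given an influence spread instance on $G=(V,E)$, I would embed it into an $\overline{f_d}$ instance where the weights $w_u$ are forced to be a common known constant; a clean way is to attach, to every $u\in V$, a dedicated pendant edge whose activity strength is calibrated so that $\tfrac{1}{2}\sum_{v\in N(u)}A_{uv}=1$, while setting all original edge strengths to zero. Then $\overline{f_d}(S)=\mathbb{E}[|I(S)|]$ on the augmented graph, so a polynomial-time routine for $\overline{f_d}$ would yield the influence spread, a contradiction. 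The main obstacle in this step is arranging the pendant construction so that the new edges do not themselves change the diffusion dynamics (e.g.\ making the pendants leaves with zero outgoing probability/weight) and so that the same construction is valid simultaneously under IC and LT; once the bookkeeping is fixed, the reduction is routine, and since the lemma is quoted from \cite{wang2017activity} I would otherwise defer to their argument.
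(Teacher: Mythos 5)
The paper does not actually prove this lemma: it is imported verbatim from Wang et al.~\cite{wang2017activity}, so there is no in-paper argument to compare against, and your reconstruction is the substantive content here. It is correct and follows the canonical route. Writing $\overline{f_d}(S)=\sum_{\mathrm{g}}\Pr[\mathrm{g}]\sum_{u\in I(S|\mathrm{g})}w_u$ with $w_u=\tfrac12\sum_{v\in N(u)}A_{uv}\ge 0$ exhibits $\overline{f_d}$ as a nonnegatively weighted influence spread; per-realization monotonicity of reachability and the inclusion $I(T\cup\{v\}|\mathrm{g})\setminus I(T|\mathrm{g})\subseteq I(S\cup\{v\}|\mathrm{g})\setminus I(S|\mathrm{g})$ are exactly Kempe et al.'s live-edge argument, and the convex combination over $\mathrm{g}$ preserves both properties, identically under IC and LT since both admit the realization coupling the paper already sets up. The Turing reduction from exact influence-spread computation is also the right idea; the one bookkeeping point worth pinning down is that the paper's IC-model requires $p_{uv}\in(0,1]$, so a pendant reached by an edge directed \emph{out of} $u$ could itself become active and contribute its own weight $w_{u'}$, breaking the intended identity. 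The clean fix is to orient each pendant edge \emph{into} $u$: the pendant is then a never-seeded source that is never activated (and can be given LT-weight $0$ so the in-weight constraint at $u$ is unaffected), yet it still enlarges $N(u)$ so that $A_{u'u}$ can be calibrated to force $w_u=1$ while all original activity strengths are set to zero. With that orientation, $\overline{f_d}(S)=\mathbb{E}[|I(S)|]$ holds exactly on the augmented graph under both models and the reduction goes through.
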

\noindent
Then, the submodularity of $\underline{f_d}(\cdot)$ $($resp, $\overline{f_d}(\cdot)$$)$ can be correlated to the DR-submodularity of $\underline{f_c}(\cdot)$ $($resp, $\overline{f_c}(\cdot)$$)$. Let us look at the following Lemma:
\begin{lem}
	Given a set function $f:2^V\rightarrow\mathbb{R}$ and a function $g:\mathcal{X}\rightarrow\mathbb{R}$, they satisfies that
	\begin{equation}
	g(\vec{x})=\sum_{S\subseteq V}f(S)\cdot\prod_{u\in S}h_u(\vec{x})\cdot\prod_{v\in V\backslash S}(1-h_v(\vec{x}))
	\end{equation}
	When $h_u(\vec{x})$ for each $u\in V$ are monotone and DR-submodular, if $f(\cdot)$ is monotone and submodular, then $g(\cdot)$ is monotone and DR-submodular.
\end{lem}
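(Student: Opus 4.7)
The plan is to recognize $g$ as a composition of two well-studied objects and then exploit standard properties of each. Define $\vec{p}(\vec{x}) := (h_1(\vec{x}),\ldots,h_n(\vec{x})) \in [0,1]^n$ and let $F:[0,1]^n \to \mathbb{R}$ denote the multilinear extension of $f$,
\begin{equation*}
F(\vec{p}) = \sum_{S \subseteq V} f(S) \prod_{u \in S} p_u \prod_{v \notin S}(1-p_v).
\end{equation*}
A direct comparison with the formula defining $g$ shows $g(\vec{x}) = F(\vec{p}(\vec{x}))$, so both claims reduce to (i) standard properties of $F$ combined with (ii) the hypothesized monotonicity and DR-submodularity of each coordinate function $h_u$.

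Before combining, I would record the two properties of $F$ the argument needs. Because $f$ is monotone, each first partial derivative $\partial F / \partial p_u$ equals the expected marginal $f(S \cup \{u\}) - f(S)$ under the product distribution on the remaining coordinates, hence is non-negative; so $F$ is monotone on $[0,1]^n$. Because $f$ is submodular, each off-diagonal second partial derivative $\partial^2 F / \partial p_u \partial p_v$ equals the expectation of the non-positive second difference $f(S \cup \{u,v\}) - f(S \cup \{u\}) - f(S \cup \{v\}) + f(S)$, while the diagonal second derivatives vanish by multilinearity. A short path-integration argument applied to $t \mapsto F(\vec{a}+t\vec{\delta}) - F(\vec{b}+t\vec{\delta})$ upgrades this sign pattern to the finite-difference DR inequality: for all $\vec{a} \leq \vec{b}$ in $[0,1]^n$ and any $\vec{\delta} \geq \vec{0}$ with $\vec{b}+\vec{\delta} \leq \vec{1}$,
\begin{equation*}
F(\vec{a}+\vec{\delta}) - F(\vec{a}) \;\geq\; F(\vec{b}+\vec{\delta}) - F(\vec{b}).
\end{equation*}

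With these in hand, monotonicity of $g$ is a one-liner: $\vec{x} \leq \vec{y}$ gives $\vec{p}(\vec{x}) \leq \vec{p}(\vec{y})$ by monotonicity of each $h_u$, hence $g(\vec{x}) \leq g(\vec{y})$ by monotonicity of $F$. For DR-submodularity, fix $\vec{x} \leq \vec{y}$ in $\mathcal{X}$ and $i \in [d]$, and set $\vec{a} = \vec{p}(\vec{x})$, $\vec{b} = \vec{p}(\vec{y})$, $\vec{\delta}_x = \vec{p}(\vec{x}+t\vec{e}_i) - \vec{p}(\vec{x})$, $\vec{\delta}_y = \vec{p}(\vec{y}+t\vec{e}_i) - \vec{p}(\vec{y})$. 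Monotonicity of each $h_u$ delivers $\vec{a} \leq \vec{b}$ with $\vec{\delta}_x, \vec{\delta}_y \geq \vec{0}$, and DR-submodularity of each $h_u$ delivers $\vec{\delta}_x \geq \vec{\delta}_y$ coordinatewise. The required marginal comparison then splits as
\begin{equation*}
F(\vec{a}+\vec{\delta}_x) - F(\vec{a}) \;\geq\; F(\vec{a}+\vec{\delta}_y) - F(\vec{a}) \;\geq\; F(\vec{b}+\vec{\delta}_y) - F(\vec{b}),
\end{equation*}
where the first step uses monotonicity of $F$ together with $\vec{\delta}_x \geq \vec{\delta}_y$, and the second uses the finite-difference DR inequality at the common shift $\vec{\delta}_y$. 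This is exactly $g(\vec{x}+t\vec{e}_i) - g(\vec{x}) \geq g(\vec{y}+t\vec{e}_i) - g(\vec{y})$.

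The main obstacle is the finite-difference DR inequality for $F$: the Hessian sign conditions are easy to verify termwise, but passing to the inequality with arbitrary (not infinitesimal) non-negative shifts $\vec{\delta}$ is where submodularity of $f$ is genuinely consumed. Everything else is bookkeeping — the chain $\vec{x} \mapsto \vec{p}(\vec{x}) \mapsto F(\vec{p}(\vec{x}))$ cleanly separates the two hypotheses, and the two-step decomposition above is specifically designed to route around the fact that the induced shifts $\vec{\delta}_x$ and $\vec{\delta}_y$ are unequal: one spends monotonicity of $F$ to equalize them, and then the symmetric-shift DR inequality for $F$ finishes the job.
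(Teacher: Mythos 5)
Your proof is correct, but it is organized quite differently from the paper's. The paper proves the lemma by a discrete hybrid (telescoping) argument, following Section 7 of Kempe et al.: it expands the marginal $g(\vec{x}+t\vec{e}_j)-g(\vec{x})$ node by node, writing it as $\sum_{u\in V}\alpha(u)\cdot\sum_{S:u\notin S}\bigl(f(S\cup\{u\})-f(S)\bigr)\beta(u,S)$ with $\alpha(u)=h_u(\vec{x}+t\vec{e}_j)-h_u(\vec{x})$, and then asserts (deferring the details to the cited reference) that the difference of these expressions at $\vec{x}$ and at $\vec{y}\geq\vec{x}$ is non-negative. You instead factor $g=F\circ\vec{p}$ through the multilinear extension $F$ of $f$, isolate as a standalone lemma the finite-difference DR inequality $F(\vec{a}+\vec{\delta})-F(\vec{a})\geq F(\vec{b}+\vec{\delta})-F(\vec{b})$ for $\vec{a}\leq\vec{b}$, $\vec{\delta}\geq\vec{0}$ (proved by the Hessian sign pattern plus path integration), and then handle the composition by the two-step chain
\begin{equation*}
F(\vec{a}+\vec{\delta}_x)-F(\vec{a})\;\geq\;F(\vec{a}+\vec{\delta}_y)-F(\vec{a})\;\geq\;F(\vec{b}+\vec{\delta}_y)-F(\vec{b}).
\end{equation*}
The two arguments consume the hypotheses in the same places (submodularity of $f$ yields the DR property of $F$, equivalently the sign of the second differences in the paper's hybrid sum; DR-submodularity of $h_u$ yields $\vec{\delta}_x\geq\vec{\delta}_y$, equivalently the comparison of the $\alpha(u)$ terms), so at bottom the mathematics coincides; but your decomposition is more modular and, notably, makes explicit the step the paper glosses over --- that the induced shifts at $\vec{x}$ and $\vec{y}$ are unequal, which you bridge with monotonicity of $F$ before invoking the equal-shift DR inequality. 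Your write-up is also self-contained, whereas the paper's is a sketch that leans on an external reference (and its definition of $\beta(u,S)$ contains a typo, repeating $\prod_{i<u}$ where one factor should run over $i>u$). The one place to be careful in a full write-up is the path-integration step itself: you should state explicitly that $\partial F/\partial p_v$ is independent of $p_v$ (multilinearity) and non-increasing in every other coordinate, since both facts are needed to get the sign of the derivative of $t\mapsto F(\vec{a}+t\vec{\delta})-F(\vec{b}+t\vec{\delta})$.
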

\begin{proof}
	This lemma is an indirect corollary from the section 7 of \cite{kempe2015maximizing}, but there is a typo over there, and we fix and rearrange here. We denote $\alpha(u)=h_u(\vec{x}+t\vec{e}_j)-h_u(\vec{x})$ and $\beta(u,S)=\prod_{i<u,i\in S}h_i(\vec{x}+t\vec{e}_j)\cdot\prod_{i<u,i\notin S}(1-h_i(\vec{x}+t\vec{e}_j))\cdot\prod_{i<u,i\in S}h_i(\vec{x})\cdot\prod_{i<u,i\notin S}(1-h_i(\vec{x}))$. Thus, we have $g(\vec{x}+t\vec{e}_i)-g(\vec{x})=\sum_{S\subseteq V}f(S)\cdot(\prod_{u\in S}h_u(\vec{x}+t\vec{e}_j)\cdot\prod_{u\in V\backslash S}(1-h_u(\vec{x}+t\vec{e}_j))-\prod_{u\in S}h_u(\vec{x})\cdot\prod_{u\in V\backslash S}(1-h_u(\vec{x})))=\sum_{S\subseteq V}f(S)\cdot(\sum_{u\in S}\alpha(u)\cdot\beta(u,S)-\sum_{u\in V\backslash S}\alpha(u)\cdot\beta(u,S))=\sum_{u\in V}(\alpha(u)\cdot\sum_{S:u\in V\backslash S}(f(S\cup\{u\})-f(S))\cdot\beta(u,S))$. Then, we study the difference $(g(\vec{x}+t\vec{e}_i)-g(\vec{x}))-(g(\vec{y}+t\vec{e}_i)-g(\vec{y}))$ for $\vec{x}\leq\vec{y}$, and show it is non-negative, whose techniques are similar to the section 7 of \cite{kempe2015maximizing}.
\end{proof}
Based on Lemma 3, Lemma 4 and Lemma 5, the following theorems can be introduced directly, that is
\begin{thm}
	The lower bound $\underline{f_c}(\cdot)$ is monotone and DR-submdoualr, but computing it given a marketing strategy $\vec{x}$ is \#P-hard under the IC-model and the LT-model.
\end{thm}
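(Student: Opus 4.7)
The plan is to package the theorem as two direct corollaries: the monotonicity and DR-submodularity assertion follows by instantiating Lemma 5, while the \#P-hardness follows by the same reduction that was used in Theorem 1 and Lemma 1.

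First I would verify that $\underline{f_c}(\vec{x})$, as defined in Equation (11), has exactly the shape required by the hypothesis of Lemma 5, namely
\begin{equation*}
\underline{f_c}(\vec{x}) = \sum_{S\subseteq V}\underline{f_d}(S)\cdot\prod_{u\in S}h_u(\vec{x})\cdot\prod_{v\in V\backslash S}\bigl(1-h_v(\vec{x})\bigr),
\end{equation*}
so that $\underline{f_c}$ plays the role of $g$ with $f = \underline{f_d}$. By Lemma 4, $\underline{f_d}$ is monotone and submodular, and by the standing assumption in Remark 3 each $h_u(\vec{x})$ is monotone and DR-submodular. Applying Lemma 5 then yields at once that $\underline{f_c}(\cdot)$ is monotone and DR-submodular, handling the first half of the theorem with no further calculation.

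For the \#P-hardness of evaluating $\underline{f_c}(\vec{x})$, I would mimic the reduction already used to prove Theorem 1 and Lemma 1. Take the special case $\mathcal{X}=\{0,1\}^n$ and $h_v(\vec{x})=x_v$. Under this choice, for a $0/1$ vector $\vec{x}$ with support $S_{\vec{x}}=\{v:x_v=1\}$ we have $\Pr[S\mid\vec{x}]=\mathbf{1}[S=S_{\vec{x}}]$, so Equation (11) collapses to $\underline{f_c}(\vec{x})=\underline{f_d}(S_{\vec{x}})$. Hence an oracle for $\underline{f_c}$ on the lattice immediately gives an oracle for $\underline{f_d}$ on arbitrary subsets. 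Since Lemma 4 states that computing $\underline{f_d}(S)$ is \#P-hard under both the IC-model and the LT-model, the same hardness is inherited by $\underline{f_c}$.

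The combination of these two reductions finishes the proof. I do not expect any real obstacle: the DR-submodularity is entirely a consequence of Lemma 5, whose hard work was already done (and whose proof sketch is provided via the Kempe et al. argument), and the hardness reduction is essentially identical to the one invoked in the proofs of Theorem 1 and Lemma 1. The only thing to be careful about is ensuring that the special choice $h_v(\vec{x})=x_v$ indeed satisfies the hypotheses of Lemma 5 (it is trivially monotone and DR-submodular on $\{0,1\}^n$), so that no additional regularity conditions are violated when invoking the two lemmas.
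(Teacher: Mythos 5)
Your proposal is correct and follows essentially the same route as the paper, which derives this theorem directly from the transfer result (Lemma 5) applied to the monotone submodular $\underline{f_d}$ and obtains \#P-hardness from the special case $\mathcal{X}=\{0,1\}^n$, $h_v(\vec{x})=x_v$ already used in Theorem 1 and Lemma 1; your only slip is a label mismatch, since the lower-bound properties of $\underline{f_d}$ are the paper's Lemma 3, not Lemma 4. Your explicit write-out of the collapse $\underline{f_c}(\vec{x})=\underline{f_d}(S_{\vec{x}})$ is a welcome elaboration of a step the paper leaves implicit.
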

\begin{thm}
	The upper bound $\overline{f_c}(\cdot)$ is monotone and DR-submdoualr, but computing it given a marketing strategy $\vec{x}$ is \#P-hard under the IC-model and the LT-model.
\end{thm}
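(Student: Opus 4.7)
The plan is to split the statement into its two claims---the structural properties (monotone and DR-submodular) and the \#P-hardness of evaluation---and discharge each by invoking an already-proved result on $\overline{f_d}$ together with a lifting lemma, in close analogy with the preceding Theorem 7 for the lower bound.

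For monotonicity and DR-submodularity, I would apply Lemma 5 verbatim with $f=\overline{f_d}$. Lemma 4 already guarantees that $\overline{f_d}$ is monotone and submodular as a set function on $2^V$, and the Remark following Theorem 3 stipulates that each strategy function $h_u(\vec{x})$ is monotone and DR-submodular. Since Equation (12) expresses $\overline{f_c}(\vec{x})$ in exactly the form
\begin{equation*}
\sum_{S\subseteq V}\overline{f_d}(S)\cdot\prod_{u\in S}h_u(\vec{x})\cdot\prod_{v\in V\setminus S}(1-h_v(\vec{x}))
\end{equation*}
to which Lemma 5 applies, both monotonicity and DR-submodularity of $\overline{f_c}$ follow immediately; no fresh marginal-gain computation is required.

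For the \#P-hardness of computing $\overline{f_c}(\vec{x})$, I would reuse the reduction template already used in the proof of Lemma 1 and of Theorem 7. Specialize to $\mathcal{X}=\{0,1\}^n$ with $h_v(\vec{x})=x_v$; this choice is monotone and DR-submodular, hence lies in the admissible class for the problem. Under this specialization $\vec{x}$ becomes the characteristic vector of some $S\subseteq V$, and every factor $\prod_u h_u(\vec{x})\prod_v(1-h_v(\vec{x}))$ in Equation (12) vanishes except the one corresponding to $S$, so $\overline{f_c}(\vec{x})=\overline{f_d}(S)$. Hence any polynomial-time oracle for $\overline{f_c}$ would yield a polynomial-time oracle for $\overline{f_d}$, contradicting the \#P-hardness of the latter established in Lemma 4 under both the IC- and LT-models.

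The main obstacle, such as it is, is purely bookkeeping: I would want to double-check that the proof of Lemma 5 (which the authors themselves flag as containing a typo carried over from \cite{kempe2015maximizing}) actually closes in the DR-submodular direction when $f$ is non-negative monotone submodular---i.e., that after grouping terms by a pivot coordinate and using $f(S\cup\{u\})-f(S)\geq f(T\cup\{u\})-f(T)$ for $S\subseteq T$, the residual factor $\beta(u,S)$ remains non-negative and is itself monotone in the right direction. Once this verification is done once for $\underline{f_c}$ in Theorem 7, the identical bookkeeping transports directly to $\overline{f_c}$, so the present theorem should follow with essentially no additional work.
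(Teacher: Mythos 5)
Your proposal matches the paper's own argument: the authors state that Theorem 8 follows directly from Lemma 4 (monotonicity, submodularity, and \#P-hardness of $\overline{f_d}$) combined with the lifting Lemma 5, and the \#P-hardness specialization $\mathcal{X}=\{0,1\}^n$, $h_v(\vec{x})=x_v$ is exactly the reduction template the paper uses in Lemma 1 and Theorem 1. Your extra caution about verifying the bookkeeping inside Lemma 5 is reasonable but does not change the route; the approaches are essentially identical.
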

\noindent
Given a marketing strategy $\vec{x}$, how can we compute the value of $\underline{f_c}(\vec{x})$ and $\overline{f_c}(\vec{x})$ effectively. The same as before, Equation (8), we create a constructed graph $\widetilde{G}=(\widetilde{V},\widetilde{E})$. According to Remark 2, computing $\underline{f_c}(\vec{x})$ $($resp, $\overline{f_c}(\vec{x})$$)$ is equivalent to compute $\underline{f_d}(\widetilde{V}-V|\widetilde{G})$ $($resp, $\overline{f_d}(\widetilde{V}-V|\widetilde{G})$$)$. They can be done by user of Monte Carlo simulations.

\section{Algorithms}
Given a function $g$ on lattice $\mathcal{X}=\{0,t,2t,\cdots\}^d$ and a budget $k$, the lattice-Greedy algorithm is shown in Algorithm \ref{a1}. If this function $g$ is monotone and DR-submodular, Algorithm \ref{a1} returns a solution that achieves a $(1-1/e)$-approximation \cite{nemhauser1978analysis}. The idea of lattice-Greedy algorithm is to find the component that has the largest marginal gain, and then allocate one unit $t$ (lattice granularity) to this coordinate until the budget is exhausted. In our CAM problem, it is \#P-hard to compute the lower bound $\underline{f_c}(\vec{x})$ and the upper bound $\overline{f_c}(\vec{x})$ in IC-model and LT-model. Thus, Algorithm \ref{a1} can give us a  $(1-1/e-\varepsilon)$-approximate solution by use of Morto Carlo simulations. However, the efficiency of Monte Carlo simulations is very low, so it is not scalable. In this section, we propose the sampling technique for these objective functions such that our CAM problem is scalable based on reverse influence sampling (RIS) \cite{borgs2014maximizing}. Then, we adapt Influence Maximization with Martingale (IMM) \cite{tang2015influence} algorithm and combine it with sandwich approximation framework to solve our lattice-based problem.

\begin{algorithm}[!t]
	\caption{\textbf{lattice-Greedy $(g,\mathcal{X},k)$}}\label{a1}
	\begin{algorithmic}[1]
		\STATE Initialize: $\vec{x}=0$ and $c=0$
		\WHILE {$c<k$}
		\STATE $i^*\leftarrow\arg\max_{i\in [d]}(g(\vec{x}+t\vec{e}_i)-g(\vec{x}))$
		\STATE $\vec{x}\leftarrow\vec{x}+t\vec{e}_{i^*}$
		\STATE $c\leftarrow c+t$
		\ENDWHILE
		\RETURN $\vec{x}$
	\end{algorithmic}
\end{algorithm}

\subsection{Sampling techniques}
Given a social network $G=(V,E)$, an diffusion model (IC/LT-model), and a seed set $S$, let ${\rm g}=(V,E_{\rm g})$ be a realization sampled from a distribution, Equation (1) or Equation (2), denoted by ${\rm g}\sim G$. We denote by $R_{\rm g}(S)$ the set of nodes that are reachable from at least one node in $S$ through $E_{\rm g}$ and $R_{{\rm g}^T}(v)$ the reverse reachable set (RR-Set) \cite{tang2014influence} for node $v$ in ${\rm g}$, which is a set composed of all nodes that can reach $v$ through $E_{\rm g}$. Let $(u,v)$ be a edge sampled from the probability distribution $A_{uv}/T$ where $T=\sum_{(u,v)\in E}A_{uv}$, denoted by $(u,v)\sim E$. Then, a random edge sampling (RE-sampling) $\mu$ can be defined as follows:
\begin{enumerate}
	\item Initialize $\mu=(\emptyset,\emptyset)$
	\item Select an edge $(u,v)\in E$ with probaility $A_{uv}/T$
	\item Generate a realization ${\rm g}$ from $G$ according to the IC/LT-model
	\item Let $N_1=R_{{\rm g}^T}(u)$ and $N_2=R_{{\rm g}^T}(u)$
	\item Let $\mu=(N_1,N_2)$
	\item Return $\mu$
\end{enumerate}

Given a marketing strategy $\vec{x}$, to estimate $f_c(\vec{x})$, we have the following results, that is,
\begin{thm}
	Given $G=(V,E)$ and a marketing strategy $\vec{x}\in\mathcal{X}$, we have
	\begin{equation}
	f_c(\vec{x})=T\cdot\mathbb{E}_{\mu=(N_1,N_2)}\left[\mathcal{H}(N_1)\cdot\mathcal{H}(N_2)\right]
	\end{equation}
	where $\mu$ is a RE sampling, $T=\sum_{(u,v)\in E}A_{uv}$ and $\mathcal{H}(N_1)=1-\prod_{s\in N_1}(1-h_s(\vec{x}))$.
\end{thm}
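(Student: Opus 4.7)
The plan is to unfold the definition $f_c(\vec{x})=\sum_{S}\Pr[S\mid\vec{x}]\,f_d(S)$ from Equation (6), substitute the expression for $f_d(S)$ from Equation (3), interchange the sum over seed sets with the sum over edges, and then use the standard reverse-reachability bridge from the RIS literature to rewrite everything as an expectation over the RE-sampling $\mu$. The goal is to match the right-hand side of the theorem after factoring out $T=\sum_{(u,v)\in E}A_{uv}$.

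First, I would rewrite the inner activity benefit as $\sum_{(u,v)\in E}A_{uv}\,\mathbf{1}[u\in I_{\rm g}(S)\wedge v\in I_{\rm g}(S)]$, take joint expectations over the random seed set $S$ (drawn according to $\vec{x}$) and the random realization ${\rm g}$, and swap the order of summation so that the edge sum moves outside. Using the key equivalence that $w$ belongs to $I_{\rm g}(S)$ precisely when $S\cap R_{{\rm g}^T}(w)\neq\emptyset$, each edge's indicator translates into a condition about the two reverse reachable sets $N_1=R_{{\rm g}^T}(u)$ and $N_2=R_{{\rm g}^T}(v)$. Writing $A_{uv}=T\cdot(A_{uv}/T)$ identifies $A_{uv}/T$ as the probability of picking edge $(u,v)$ in step 2 of RE-sampling, so the combined sum over edges and expectation over ${\rm g}$ becomes exactly $T\cdot\mathbb{E}_{\mu=(N_1,N_2)}[\,\cdot\,]$.

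The main obstacle is the final step of showing that the inner probability $\Pr_S[S\cap N_1\neq\emptyset\wedge S\cap N_2\neq\emptyset]$ collapses to the product $\mathcal{H}(N_1)\mathcal{H}(N_2)$. Because each node is included in $S$ independently with probability $h_s(\vec{x})$, the single-set identity $\Pr_S[S\cap N\neq\emptyset]=1-\prod_{s\in N}(1-h_s(\vec{x}))=\mathcal{H}(N)$ is immediate by taking complements of independent Bernoulli events, but the joint factorization is only automatic when $N_1\cap N_2=\emptyset$, since any shared node lying in $S$ witnesses both events at once and introduces positive correlation between them. I would address this by partitioning $N_1\cup N_2$ into the disjoint blocks $N_1\setminus N_2$, $N_2\setminus N_1$, and $N_1\cap N_2$, conditioning on the trace of $S$ on the overlap, and verifying the product form blockwise; the independence of the three block restrictions of $S$ together with the multiplicative structure of $\mathcal{H}$ is what will close the argument, after which the theorem follows directly by taking the outer expectation over $\mu$.
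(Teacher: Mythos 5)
Your derivation follows the paper's own route almost exactly: unfold $f_c(\vec{x})=\mathbb{E}_{S\sim\vec{x}}[f_d(S)]$, push the edge sum outside the expectations, use the reverse-reachability equivalence $w\in I_{\rm g}(S)\Leftrightarrow S\cap R_{{\rm g}^T}(w)\neq\emptyset$, and factor out $T$ so that $A_{uv}/T$ becomes the edge-sampling distribution of the RE-sampling. You also correctly isolate the one genuinely delicate point, namely that $\Pr_{S\sim\vec{x}}[S\cap N_1\neq\emptyset\wedge S\cap N_2\neq\emptyset]$ does not factor when $N_1\cap N_2\neq\emptyset$.

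The gap is in your last step: the blockwise verification you propose cannot ``close the argument'' in favor of the displayed product form, because that identity is false whenever the overlap is nonempty and has positive activation probability. Carrying out your own partition honestly, with $a=\prod_{s\in N_1\cap N_2}(1-h_s(\vec{x}))$, $b=\prod_{s\in N_1\setminus N_2}(1-h_s(\vec{x}))$, $c=\prod_{s\in N_2\setminus N_1}(1-h_s(\vec{x}))$, the joint probability equals $(1-a)+a(1-b)(1-c)=1-ab-ac+abc$, whereas $\mathcal{H}(N_1)\cdot\mathcal{H}(N_2)=(1-ab)(1-ac)=1-ab-ac+a^{2}bc$; these differ by $a(1-a)bc\geq 0$, so the two events are positively correlated and the product strictly underestimates the truth in general. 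The paper's own proof performs exactly your partition and stops at the intersection-corrected expression $\mathcal{H}(N_1\cap N_2)+(1-\mathcal{H}(N_1\cap N_2))\cdot\mathcal{H}(N_1\setminus N_2)\cdot\mathcal{H}(N_2\setminus N_1)$, which is also what the estimator in Equation (15) actually uses; the product $\mathcal{H}(N_1)\cdot\mathcal{H}(N_2)$ in the theorem's display is a misstatement of that result. So your instinct about where the difficulty lies is right, but the argument must terminate at the corrected formula rather than at the factorized one you are aiming for.
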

\begin{proof}
	Given a marketing strategy $\vec{x}\in\mathcal{X}$, according to Equation (5), we can write $f_c(\vec{x})$ as $f_c(\vec{x})=$
	\begin{flalign}
		&=\mathbb{E}_{S\sim\vec{x}}[f_d(S)]\nonumber\\
		&=T\cdot\mathbb{E}_{S\sim\vec{x},\mu=(N_1,N_2)}[\mathbb{I}(S\cap N_1\neq\emptyset\land S\cap N_2\neq\emptyset)]\nonumber\\
		&=T\cdot\mathbb{E}_{\mu=(N_1,N_2)}\left[\Pr_{S\sim\vec{x}}[S\cap N_1\neq\emptyset\land S\cap N_2\neq\emptyset)]\right]\nonumber
	\end{flalign}
	Here, the domain $N_1\cup N_2$ can be considered as $(N_1\cap N_2)\cup(N_1\backslash N_2)\cup(N_2\backslash N_1)$. Thus, we have $f_c(\vec{x})=$
	\begin{flalign}
		&=T\cdot\mathbb{E}_{S\sim\vec{x},\mu=(N_1,N_2)}\left[{\Pr_{S\sim\vec{x}}[S\cap(N_1\cap N_2)\neq\emptyset]}\right.\nonumber\\
		&\left.{+\Pr_{S\sim\vec{x}}[S\cap(N_1\cap N_2)=\emptyset]\cdot\Pr_{S\sim\vec{x}}[S\cap(N_1\backslash N_2)\neq\emptyset]}\right.\nonumber\\
		&\left.{\cdot\Pr_{S\sim\vec{x}}[S\cap(N_1\backslash N_2)\neq\emptyset]}\right]\nonumber\\
		&=T\cdot\mathbb{E}_{S\sim\vec{x},\mu=(N_1,N_2)}\left[{\mathcal{H}(N_1\cap N_2)}\right.\nonumber\\
		&\left.{+(1-\mathcal{H}(N_1\cap N_2))\cdot\mathcal{H}(N_1\backslash N_2)\cdot\mathcal{H}(N_2\backslash N_1)}\right]\nonumber
	\end{flalign}
	where $\mathbb{I}(\cdot)$ is the indicator function which is equal to $1$ if $(\cdot)$ is true. Then, $\Pr_{S\sim\vec{x}}[S\cap N_1\neq\emptyset]$ is the probabilty there is at least one node in $N_1$ activated as a seed, thus, we have  $\Pr_{S\sim\vec{x}}[S\cap N_1\neq\emptyset]=1-\prod_{s\in N_1}(1-h_s(\vec{x}))=\mathcal{H}(N_1)$.
\end{proof}
\noindent
Let $M=\{\mu_1,\mu_2,\cdots,\mu_\theta\}$ be a collection of $\theta$ independent RE-sampling, by Equation (14), we have
\begin{flalign}
	\hat{f}_c(\vec{x})&=\frac{T}{\theta}\sum_{\mu=(N_1,N_2),\mu \in M}\left({\mathcal{H}(N_1\cap N_2)}\right.\nonumber\\
	&\left.{+(1-\mathcal{H}(N_1\cap N_2))\cdot\mathcal{H}(N_1\backslash N_2)\cdot\mathcal{H}(N_2\backslash N_1)}\right)
\end{flalign}
According to Theorem 9, $\hat{f}_c(\vec{x})$ is an unbiased estimator of $f_c(\vec{x})$ for any fixed $\theta$ and it is not DR-submodular as well. Similarly, for the lower bound $\underline{f_c}(\vec{x})$, we have the following results, that is,
\begin{thm}
	Given $G=(V,E)$ and a marketing strategy $\vec{x}\in\mathcal{X}$, we have
	\begin{equation}
	\underline{f_c}(\vec{x})=T\cdot\mathbb{E}_{\mu=(N_1,N_2)}\left[\mathcal{H}(N_1\cap N_2)\right]
	\end{equation}
	where $\mu$ is a RE sampling, $T=\sum_{(u,v)\in E}A_{uv}$ and $\mathcal{H}(N_1\cap N_2)=1-\prod_{s\in N_1\cap N_2}(1-h_s(\vec{x}))$.
\end{thm}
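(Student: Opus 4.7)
The plan is to mirror the proof of Theorem 9, since the lower bound arises from the same expansion $\underline{f_c}(\vec{x})=\mathbb{E}_{S\sim\vec{x}}[\underline{f_d}(S)]$, but with a more restrictive coverage condition that collapses $N_1\cup N_2$ down to $N_1\cap N_2$. So the first step is to rewrite $\underline{f_d}(S)$ as a sum over edges $(u,v)\in E$ of $A_{uv}$ times the probability (over the realization ${\rm g}\sim G$) that $(u,v)\in\bigcup_{s\in S}E[I(s)]$, i.e.\ that both endpoints $u$ and $v$ are reached in ${\rm g}$ from a \emph{common} seed $s\in S$.

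The key step, and the one that really distinguishes this theorem from Theorem 9, is to translate ``reached in ${\rm g}$ from a common seed $s\in S$'' into the RR-set language used by the RE-sampling $\mu=(N_1,N_2)$. By definition $N_1=R_{{\rm g}^T}(u)$ and $N_2=R_{{\rm g}^T}(v)$, so a node $s$ reaches both $u$ and $v$ in ${\rm g}$ if and only if $s\in N_1\cap N_2$. Therefore the edge $(u,v)$ contributes to $\underline{f_d}(S)$ under ${\rm g}$ precisely when $S\cap(N_1\cap N_2)\neq\emptyset$. This is the only conceptual subtlety: Theorem 9 allowed $u$ and $v$ to be reached from two different seeds, producing the inclusion--exclusion over $N_1\cap N_2$, $N_1\setminus N_2$, $N_2\setminus N_1$; here the ``same-seed'' requirement erases the latter two pieces.

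Once this identification is made, the rest is bookkeeping identical to Theorem 9. I would factor the edge sum by sampling $(u,v)\sim E$ with probability $A_{uv}/T$ and pulling $T$ outside, obtaining
\begin{equation*}
\underline{f_d}(S)=T\cdot\mathbb{E}_{\mu=(N_1,N_2)}\bigl[\mathbb{I}(S\cap(N_1\cap N_2)\neq\emptyset)\bigr].
\end{equation*}
Then I take the outer expectation over $S\sim\vec{x}$, swap the two expectations by Fubini (the RE-sampling and the seed-set sampling are independent), and use that under $S\sim\vec{x}$ each node is included independently with probability $h_s(\vec{x})$, so
\begin{equation*}
\Pr_{S\sim\vec{x}}\!\bigl[S\cap(N_1\cap N_2)\neq\emptyset\bigr]=1-\!\prod_{s\in N_1\cap N_2}\!\!(1-h_s(\vec{x}))=\mathcal{H}(N_1\cap N_2).
\end{equation*}
Substituting yields the claimed identity.

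The main obstacle, as noted, is justifying cleanly the equivalence between the edge-coverage event $\{(u,v)\in\bigcup_{s\in S}E[I(s)]\}$ under a live-edge realization ${\rm g}$ and the RR-set event $\{S\cap(N_1\cap N_2)\neq\emptyset\}$; this must be argued separately but uniformly for both IC- and LT-model realizations sampled according to (1)--(2), using the standard equivalence between forward reachability in ${\rm g}$ and reverse reachability in ${\rm g}^T$. Everything else is routine manipulation.
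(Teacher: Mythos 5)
Your proposal is correct and follows essentially the same route as the paper's proof: expanding $\underline{f_c}(\vec{x})=\mathbb{E}_{S\sim\vec{x}}[\underline{f_d}(S)]$, rewriting the edge sum via RE-sampling as $T\cdot\mathbb{E}_{S,\mu}[\mathbb{I}(S\cap(N_1\cap N_2)\neq\emptyset)]$, swapping expectations, and evaluating $\Pr_{S\sim\vec{x}}[S\cap(N_1\cap N_2)\neq\emptyset]=\mathcal{H}(N_1\cap N_2)$. In fact you supply more detail than the paper does on the key identification (that a seed $s$ activates both endpoints of the sampled edge in realization ${\rm g}$ iff $s\in N_1\cap N_2$, correctly reading $N_2=R_{{\rm g}^T}(v)$ despite the typo in the paper's RE-sampling definition), a step the paper's proof asserts in a single line.
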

\begin{proof}
	Given a marketing strategy $\vec{x}\in\mathcal{X}$, according to Equation (11), we can write $\underline{f_c}(\vec{x})$ as $\underline{f_c}(\vec{x})=$
	\begin{flalign}
		&=\mathbb{E}_{S\sim\vec{x}}[\underline{f_d}(S)]\nonumber\\
		&=T\cdot\mathbb{E}_{S\sim\vec{x},\mu=(N_1,N_2)}[\mathbb{I}(S\cap(N_1\cap N_2)\neq\emptyset)]\nonumber\\
		&=T\cdot\mathbb{E}_{\mu=(N_1,N_2)}\left[\Pr_{S\sim\vec{x}}[(S\cap(N_1\cap N_2)\neq\emptyset)]\right]\nonumber\\
		&=T\cdot\mathbb{E}_{\mu=(N_1,N_2)}\left[\mathcal{H}(N_1\cap N_2)\right]\nonumber
	\end{flalign}
	where $\mathbb{I}(\cdot)$ is the indicator function which is equal to $1$ if $(\cdot)$ is true. Then, $\Pr_{S\sim\vec{x}}[S\cap(N_1\cap N_2)\neq\emptyset]$ is the probabilty there is at least one node in $N_1\cap N_2$ activated as a seed because it requires that the endpoints of an edge can be activated by the same seed node, thus, we have $\Pr_{S\sim\vec{x}}[S\cap(N_1\cap N_2)\neq\emptyset]=1-\prod_{s\in(N_1\cap N_2)}(1-h_s(\vec{x}))=\mathcal{H}(N_1\cap N_2)$.
\end{proof}
\noindent
By Equation (16), we have
\begin{equation}
\underline{\hat{f}_c}(\vec{x})=\frac{T}{\theta}\sum_{\mu=(N_1,N_2),\mu \in M}\left(\mathcal{H}(N_1\cap N_2)\right)
\end{equation}

For the upper bound $\overline{f_c}(\vec{x})$, the sampling technique is a litte different. Shown as Equation (10), the upper bound is a weighted influence maxization on lattice. Let $u$ be a node sampled from the probability distribution $w(u)/W$ where $w(u)=\sum_{v\in N(u)}A_{uv}/2$ and $W=\sum_{u\in V}w(u)$, denoted by $u\sim V$. Then, a random node sampling (RN-sampling) $\nu$ can be defined as follows:
\begin{enumerate}
	\item Initialize $\nu=(\emptyset,\emptyset)$
	\item Select an node $u\in V$ with probaility $w(u)/W$
	\item Generate a realization ${\rm g}$ from $G$ according to the IC/LT-model
	\item Let $\nu=R_{{\rm g}^T}(u)$
	\item Return $\nu$
\end{enumerate}

Given a marketing strategy $\vec{x}$, to estimate $\overline{f_c}(\vec{x})$, we have the following results, that is,
\begin{thm}
	Given $G=(V,E)$ and a marketing strategy $\vec{x}\in\mathcal{X}$, we have
	\begin{equation}
	\overline{f_c}(\vec{x})=W\cdot\mathbb{E}_{\nu}\left[\mathcal{H}(\nu)\right]
	\end{equation}
	where $\nu$ is a RN sampling, $W=\sum_{u\in V}w(u)$ and $\mathcal{H}(\nu)=1-\prod_{s\in\nu}(1-h_s(\vec{x}))$.
\end{thm}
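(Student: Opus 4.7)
The plan is to mirror the argument of Theorem 10, but replace the edge-sampling step by the node-sampling step appropriate for the upper bound. First, I would invoke Equation (12) to rewrite $\overline{f_c}(\vec{x})=\mathbb{E}_{S\sim\vec{x}}[\overline{f_d}(S)]$, so the task reduces to expressing $\overline{f_d}(S)$ in a form that naturally produces an RN-sampling, after which taking an outer expectation over $S$ will finish the job.

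Next, I would regroup Equation (10) node by node. The double sum $\sum_{u\in V[I(S)]}\sum_{v\in N(u)}A_{uv}/2$ equals $\sum_{u\in V[I(S)]}w(u)$, so $\overline{f_d}(S)=\sum_{u\in V}w(u)\cdot\Pr[u\in I(S)]$. Dividing and multiplying by $W=\sum_{u\in V}w(u)$ turns this finite sum into an expectation $W\cdot\mathbb{E}_{u\sim V}[\Pr[u\in I(S)]]$ over a node $u$ drawn with probability $w(u)/W$, which is exactly the first step of RN-sampling.

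The third step is the standard RIS identity, already exploited in Theorems 9 and 10: for a live-edge realization ${\rm g}\sim G$, the event $u\in I(S)$ is equivalent to $S\cap R_{{\rm g}^T}(u)\neq\emptyset$. Hence $\Pr[u\in I(S)]=\mathbb{E}_{\rm g}[\mathbb{I}(S\cap R_{{\rm g}^T}(u)\neq\emptyset)]=\mathbb{E}_\nu[\mathbb{I}(S\cap\nu\neq\emptyset)]$ when $\nu$ is the RN-sampling obtained by first drawing $u\sim V$ and then generating ${\rm g}$. Substituting yields $\overline{f_d}(S)=W\cdot\mathbb{E}_\nu[\mathbb{I}(S\cap\nu\neq\emptyset)]$. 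Finally, swapping expectations and applying the independence of seed selections under $\vec{x}$ gives $\Pr_{S\sim\vec{x}}[S\cap\nu\neq\emptyset]=1-\prod_{s\in\nu}(1-h_s(\vec{x}))=\mathcal{H}(\nu)$, so $\overline{f_c}(\vec{x})=W\cdot\mathbb{E}_\nu[\mathcal{H}(\nu)]$.

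The main obstacle I anticipate is purely bookkeeping rather than any new combinatorial idea: I must make sure the factor of $1/2$ in Equation (10) is absorbed correctly into $w(u)$ so that the sampling distribution $w(u)/W$ is consistent with the normalization $W=\sum_u w(u)$, and that the neighborhood $N(u)=N^-(u)\cup N^+(u)$ is treated symmetrically under the activity-strength convention. Once those weights are aligned, no further structural argument is needed: the RIS identity and the Bernoulli independence built into $\Pr[S\mid\vec{x}]$ do all the remaining work in one clean pass.
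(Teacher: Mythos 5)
Your proposal is correct and follows essentially the same route as the paper's proof: rewrite $\overline{f_c}(\vec{x})=\mathbb{E}_{S\sim\vec{x}}[\overline{f_d}(S)]$, express $\overline{f_d}(S)=W\cdot\mathbb{E}_{\nu}[\mathbb{I}(S\cap\nu\neq\emptyset)]$ via RN-sampling, swap expectations, and evaluate $\Pr_{S\sim\vec{x}}[S\cap\nu\neq\emptyset]=\mathcal{H}(\nu)$ using the independence of seed activations. The only difference is that you explicitly derive the middle identity (regrouping Equation (10) as $\sum_{u}w(u)\Pr[u\in I(S)]$ and applying the reverse-reachability equivalence), whereas the paper simply defers that step to a citation; your version is more self-contained but not a different argument.
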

\begin{proof}
	Given a marketing strategy $\vec{x}\in\mathcal{X}$, according to Equation (18), we can write $\overline{f_c}(\vec{x})$ as $\overline{f_c}(\vec{x})=\mathbb{E}_{S\sim\vec{x}}[\overline{f_d}(S)]=W\cdot\mathbb{E}_{S\sim\vec{x},\nu}[\mathbb{I}(S\cap\nu\neq\emptyset)]=W\cdot\mathbb{E}_{\nu}[\Pr_{S\sim\vec{x}}[(S\cap\nu\neq\emptyset)]]=W\cdot\mathbb{E}_{\nu}\left[\mathcal{H}(\nu)\right]$. Then, $W\cdot\mathbb{E}_{S\sim\vec{x},\nu}[\mathbb{I}(S\cap\nu\neq\emptyset)]$ can be inferred from the proof proposed in \cite{nguyen2016cost} and $\Pr_{S\sim\vec{x}}[S\cap\nu\neq\emptyset]$ is the probabilty there is at least one node in $\nu$ activated as a seed, thus, we have $\Pr_{S\sim\vec{x}}[S\cap \nu\neq\emptyset]=1-\prod_{s\in\nu}(1-h_s(\vec{x}))=\mathcal{H}(\nu)$.
\end{proof}
\noindent
Let $N=\{\nu_1,\nu_2,\cdots,\nu_\theta\}$ be a collection of $\theta$ independent RN-sampling, by Equation (18), we have
\begin{equation}
\overline{\hat{f}_c}(\vec{x})=\frac{W}{\theta}\sum_{\nu\in N}\left(\mathcal{H}(\nu)\right)
\end{equation}
According to Theorem 10 and Theorem 11, $\underline{\hat{f}_c}(\vec{x})$ and $\overline{\hat{f}_c}(\vec{x})$ is an unbiased estimator of $\underline{f_c}(\vec{x})$ and $\overline{f_c}(\vec{x})$ for any fixed $\theta$ and they are monotone and DR-submodular.

\subsection{Modified IMM on Lattice}
The unbiased estimators of our objective functions have been obtained in last subsection, here, we extend the IMM algorithm \cite{tang2015influence}, the state-of-the-art method for the IM problem, to design the solutions of lower bound and upper bound of our CAM problem. The core idea of IMM on IM problem: produce enough random reverse reachable set (Random RR-Set), where the node is selected uniformly and randamly, and then find the maximum coverage under the cardinality constraint by use of greedy algorithm. The IMM process can be divided into two stages as follows:
\begin{enumerate}
	\item Sampling Random RR-Sets: This stage generates enough random RR-set iteratively and independently and put then into $\mathcal{R}$ until satisfying a certain stopping condition.
	\item Node selection: This stage adopts standard greedy method to drive a size-k seed set that covers sub-maximum number of RR-Sets in $\mathcal{R}$.
\end{enumerate}

Extended to our problem, we generate enough RE-sampling for lower bound or RN-sampling for upper bound fitst, then the lattice-greedy algorithm on these RE-sampling or RN-sampling is adopted to get the sub-optimal strategy marketing $\vec{x}$. Let us introduce the node selection first. Let $M=\{\mu_1,\mu_2,\cdots,\mu_\theta\}$ be a collection of $\theta$ independent RE-sampling and $N=\{\nu_1,\nu_2,\cdots,\nu_\theta\}$ be a collection of $\theta$ independent RN-sampling. The node selection is shown in Algorithm \ref{a2}, which is a $(1-1/e)$-approximate solution to the estimator of upper and lower bound.

In the first stage, we can use the sampling procedure similar to IMM, but need some modifications. For the lower bound, these modifications are: (1) we replace the number of node $n$ with $T$, where $T=\sum_{(u,v)\in E}A_{uv}$; (2) we use lattice-greedy algorithm, Algorithm \ref{a2}, on RE-sampling instead of greedy algorithm on RR-set; and (3) we replace $\log\binom{n}{k}$ with $\min(kt^{-1}\log d,d\log(kt^{-1}))$ in the two parameters $\lambda'$ and $\lambda^*$ \cite{chen2018scalable}. We have
\begin{flalign}
	&\alpha=\sqrt{\ell\log T+\log 2}\\
	&\beta=\sqrt{(1-1/e)(\min(kt^{-1}\log d,d\log(kt^{-1}))+\alpha^2)}
\end{flalign}

\begin{algorithm}[!t]
	\caption{\textbf{lattice-Greedy $(\underline{\hat{f}_c}(\overline{\hat{f}_c}),M(N),\mathcal{X},k)$}}\label{a2}
	\begin{algorithmic}[1]
		\STATE Initialize: $\vec{x}^\circ=0$ and $c=0$
		\WHILE {$c<k$}
		\STATE $i^\circ\leftarrow\arg\max_{i\in [d]}\left(\underline{\hat{f}_c}(\overline{\hat{f}_c})(\vec{x}^\circ+t\vec{e}_i)-\underline{\hat{f}_c}(\overline{\hat{f}_c})(\vec{x}^\circ)\right)$
		\STATE $\vec{x}^\circ\leftarrow\vec{x}^\circ+t\vec{e}_{i^\circ}$
		\STATE $c\leftarrow c+t$
		\ENDWHILE
		\RETURN $\vec{x}^\circ$
	\end{algorithmic}
\end{algorithm}

\begin{algorithm}[!t]
	\caption{\textbf{sampling-LB $(G,\underline{\hat{f}_c},\mathcal{X},k,\varepsilon,\ell)$}}\label{a3}
	\begin{algorithmic}[1]
		\STATE Initialize: $M=\emptyset$, $LB=0$, $\varepsilon'=\sqrt{2}\varepsilon$
		\STATE Initialize: $M'=\emptyset$
		\STATE Let $\lambda'=(2+\frac{2}{3}\varepsilon')(\min(Bt^{-1}\log d,d\log(Bt^{-1}))+\ell\log T+\log\log_2 T)\cdot T/\varepsilon'^2$
		\STATE Let $\lambda^*=2T\cdot((1-\frac{1}{e})\cdot\alpha+\beta)^2/\varepsilon^2$
		\FOR {$i=1$ to $\log_2 T-1$}
		\STATE Let $y_i=T/2^i$
		\STATE Let $\theta_i=\lambda'/y_i$, where $\lambda'$ is defined above
		\WHILE {$|M|\leq\theta_i$}
		\STATE $\mu\leftarrow$ RE-sampling $(G)$
		\STATE $M\leftarrow M\cup\{\mu\}$
		\ENDWHILE
		\STATE $\vec{x}^\circ\leftarrow$ lattice-Greedy $(\underline{\hat{f}_c},M,\mathcal{X},k)$
		\IF {$\underline{\hat{f}_c}(\vec{x}^\circ)\geq(1+\varepsilon')\cdot y_i$}
		\STATE $LB\leftarrow\underline{\hat{f}_c}(\vec{x}^\circ)/(1+\varepsilon')$
		\STATE \textbf{break}
		\ENDIF
		\ENDFOR
		\STATE $\theta\leftarrow\lambda^*/LB$
		\WHILE {$|M'|\leq\theta$}
		\STATE $\mu\leftarrow$ RE-sampling $(G)$
		\STATE $M'\leftarrow M'\cup\{\mu\}$
		\ENDWHILE
		\RETURN $M'$
	\end{algorithmic}
\end{algorithm}
\begin{algorithm}[!t]
	\caption{\textbf{IMM-LB $(G,\underline{\hat{f}_c},\mathcal{X},k,\varepsilon,\ell)$}}\label{a4}
	\begin{algorithmic}[1]
		\STATE $M'\leftarrow$ sampling-LB $(G,\underline{\hat{f}_c},\mathcal{X},k,\varepsilon,\ell)$
		\STATE $\vec{x}_L\leftarrow$ lattice-Greedy $(\underline{\hat{f}_c},M',\mathcal{X},B)$
		\RETURN $\vec{x}_L$
	\end{algorithmic}
\end{algorithm}

\noindent
Then, the sampling procedure for lower bound, sampling-LB, can be shown in Algorithm \ref{a3}, where $\varepsilon$ is accuracy and $\ell$ is confidence. Chen has told us that there is an issue \cite{chen2018issue} in original IMM algorihtm \cite{tang2015influence} and gave us two workarounds \cite{chen2018scalable}. We adopt the first workaround, line 19 to 22 in Algorithm \ref{a3}, that is more simple and straightforward. The IMM-LB algorithm is shown in Algorithm \ref{a4}.

\begin{thm}
	The solution $\vec{x}_L$ returned by Algorithm \ref{a4} is a $(1-1/e-\varepsilon)$-approximation of the upper bound of CAM problem with at least $1-1/T^\ell$ probability.
\end{thm}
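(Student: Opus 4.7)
The plan is to adapt the martingale-based IMM analysis of Tang, Shi, and Xiao \cite{tang2015influence}, as extended to the DR-submodular lattice setting by Chen et al.~\cite{chen2018scalable}, to the RE-sampling estimator used in Algorithm \ref{a4}. By Theorem 10 the quantity $\underline{\hat{f}_c}(\vec{x})=(T/\theta)\sum_{\mu\in M'}\mathcal{H}(N_1\cap N_2)$ is an unbiased estimator of the objective maximized by the algorithm, and by Theorem 6 this objective is monotone and DR-submodular in $\vec{x}$. Consequently lattice-Greedy enjoys the standard offline $(1-1/e)$ guarantee on the estimator, and what remains is to show that, with failure probability at most $1/T^\ell$, (i) the collection $M'$ returned by sampling-LB is large enough to give uniformly $\varepsilon$-accurate concentration over every feasible lattice point, and (ii) the data-dependent threshold $LB$ that sizes $M'$ is indeed at most the true optimum $\mathrm{OPT}$.

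For (ii) I would analyze the scan over $y_i=T/2^i$ in Algorithm \ref{a3}. For a fixed $\vec{y}$ with $|\vec{y}|\le k$, a Hoeffding/martingale bound on the $\theta_i$ i.i.d.\ RE-samples (each contributing a term in $[0,T]$) produces an exponentially small failure probability; a union bound over the effective lattice log-size $\min(kt^{-1}\log d,\,d\log(kt^{-1}))$ and over the $\log_2 T - 1$ trial iterations controls the total. The substitutions in $\lambda'$ compared with the original IMM $\lambda'$ are exactly calibrated to these two differences: $n$ is replaced by $T=\sum_{(u,v)\in E}A_{uv}$ because the per-sample range is $T$ rather than $n$, and $\log\binom{n}{k}$ is replaced by the lattice enumeration quantity above because feasible solutions are lattice points instead of $k$-subsets. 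The conclusion is that, except on a small failure event, whichever iteration triggers first yields $LB=\underline{\hat{f}_c}(\vec{x}^\circ)/(1+\varepsilon')\le\mathrm{OPT}$, and that the scan cannot skip past an $i$ at which $y_i\le\mathrm{OPT}/(1+\varepsilon')$.

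Conditional on $LB\le\mathrm{OPT}$, the final sample size $\theta=\lambda^*/LB\ge\lambda^*/\mathrm{OPT}$ together with the parameters $\alpha$ and $\beta$ of (20)--(21) yields, by a one-sided martingale concentration on $\underline{\hat{f}_c}(\vec{x}^*)$ and a uniform concentration over all candidate solutions (reusing the same lattice log-factor), the two inequalities
\[\underline{\hat{f}_c}(\vec{x}^*)\ge (1-\varepsilon_1)\,\underline{f_c}(\vec{x}^*),\]
\[\underline{f_c}(\vec{x}_L)\ge \underline{\hat{f}_c}(\vec{x}_L)-\varepsilon_2\,\mathrm{OPT},\]
with $\varepsilon_1$ controlled by $\alpha$ and $\varepsilon_2$ by $\beta$. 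Chaining them with the greedy guarantee $\underline{\hat{f}_c}(\vec{x}_L)\ge (1-1/e)\,\underline{\hat{f}_c}(\vec{x}^*)$ and rearranging with the exact choice $\lambda^*=2T((1-1/e)\alpha+\beta)^2/\varepsilon^2$ delivers the claimed $(1-1/e-\varepsilon)$ ratio against $\mathrm{OPT}$.

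The main obstacle will be bookkeeping the failure budget: the union bound must simultaneously absorb the $\log_2 T-1$ trial thresholds in sampling-LB, the two one-sided concentration events in the final greedy step, and the enumeration over all lattice points with $|\vec{y}|\le k$, all while fitting inside $1/T^\ell$. I would handle this by mirroring the Chen et al.~\cite{chen2018scalable} accounting line-for-line with the two substitutions noted above, namely (a) rescaling sampling variances by $T$ instead of $n$ to reflect the per-sample range of the RE-sampling estimator, and (b) replacing $\binom{n}{k}$ by $\min(d^{k/t},\,(k/t)^d)$ inside each Chernoff exponent. Once these substitutions are installed, the probability calculus reproduces the classical IMM budget split and both the confidence $1-1/T^\ell$ and the ratio $1-1/e-\varepsilon$ follow.
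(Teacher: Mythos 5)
Your proposal is correct and takes exactly the route the paper intends: in fact, the paper states this theorem with \emph{no} explicit proof at all, justifying it only by the listed modifications to IMM \cite{tang2015influence} and its lattice adaptation \cite{chen2018scalable} (replace $n$ by $T$, replace $\log\binom{n}{k}$ by $\min(kt^{-1}\log d,\,d\log(kt^{-1}))$, use fresh samples $M'$ per the workaround of \cite{chen2018issue}), which is precisely the substitution-and-union-bound argument you sketch, so your write-up is strictly more detailed than the paper's own justification. The only caveats are cosmetic: the theorem's ``upper bound'' wording is a typo for ``lower bound,'' which your proof correctly targets, and the DR-submodularity needed for the greedy step is that of the estimator $\underline{\hat{f}_c}$ (asserted after Theorem 11) rather than of $\underline{f_c}$ itself (Theorem 6), though both hold.
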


To the original problem, we have known that $\hat{f}_c(\vec{x})$ is an unbiased estimator of $f_c(\vec{x})$. Based on the collection $M'$ generated in Algorithm \ref{a4}, we can use it to get solution $\vec{x}_A$ by calling lattice-Greedy $(\hat{f}_c,M',\mathcal{X},k)$, because they are all relying on RE-sampling. Here, $\vec{x}_A$ is a heuristic solution, no any theoretical guarantee, to the CAM problem.

\begin{algorithm}[!t]
	\caption{\textbf{sampling-UB $(G,\overline{\hat{f}_c},\mathcal{X},k,\varepsilon,\ell)$}}\label{a5}
	\begin{algorithmic}[1]
		\STATE Initialize: $N=\emptyset$, $LB=0$, $\varepsilon'=\sqrt{2}\varepsilon$
		\STATE Initialize: $N'=\emptyset$
		\STATE Let $\lambda'=(2+\frac{2}{3}\varepsilon')(\min(Bt^{-1}\log d,d\log(Bt^{-1})+\ell\log W+\log\log_2 W)\cdot W/\varepsilon'^2$
		\STATE Let $\lambda^*=2W\cdot((1-\frac{1}{e})\cdot\alpha+\beta)^2/\varepsilon^2$
		\FOR {$i=1$ to $\log_2 W-1$}
		\STATE Let $y_i=T/2^i$
		\STATE Let $\theta_i=\lambda'/y_i$, where $\lambda'$ is defined above
		\WHILE {$|N|\leq\theta_i$}
		\STATE $\nu\leftarrow$ RN-sampling $(G)$
		\STATE $N\leftarrow N\cup\{\nu\}$
		\ENDWHILE
		\STATE $\vec{x}^\circ\leftarrow$ lattice-Greedy $(\overline{\hat{f}_c},N,\mathcal{X},k)$
		\IF {$\overline{\hat{f}_c}(\vec{x}^\circ)\geq(1+\varepsilon')\cdot y_i$}
		\STATE $LB\leftarrow\overline{\hat{f}_c}(\vec{x}^\circ)/(1+\varepsilon')$
		\STATE \textbf{break}
		\ENDIF
		\ENDFOR
		\STATE $\theta\leftarrow\lambda^*/LB$
		\WHILE {$|N'|\leq\theta$}
		\STATE $\nu\leftarrow$ RN-sampling $(G)$
		\STATE $N'\leftarrow N'\cup\{\nu\}$
		\ENDWHILE
		\RETURN $N'$
	\end{algorithmic}
\end{algorithm}

\begin{algorithm}[!t]
	\caption{\textbf{IMM-UB $(G,\overline{\hat{f}_c},\mathcal{X},k,\varepsilon,\ell)$}}\label{a6}
	\begin{algorithmic}[1]
		\STATE $N'\leftarrow$ sampling-UB $(G,\overline{\hat{f}_c},\mathcal{X},k,\varepsilon,\ell)$
		\STATE $\vec{x}_U\leftarrow$ lattice-Greedy $(\overline{\hat{f}_c},N',\mathcal{X},k)$
		\RETURN $\vec{x}_U$
	\end{algorithmic}
\end{algorithm}

For the upper bound, the modifications are similar to that of lower bound, but (1) we replace the number of node $n$ with $W$, where $W=\sum_{u\in V}w(u)$; and (2) we use lattice-greedy algorithm, Algorithm \ref{a2}, on RN-sampling. That is,
\begin{flalign}
	&\alpha'=\sqrt{\ell\log W+\log 2}\\
	&\beta'=\sqrt{(1-1/e)(\min(kt^{-1}\log d,d\log(kt^{-1}))+\alpha'^2)}
\end{flalign}
Then, the sampling procedure for upper bound, sampling-UB, can be shown in Algorithm \ref{a5}, where $\varepsilon$ is accuracy and $\ell$ is confidence. The IMM-UB algorithm is shown in Algorithm \ref{a6} similarly.

\begin{thm}
	The solution $\vec{x}_U$ returned by Algorithm \ref{a6} is a $(1-1/e-\varepsilon)$-approximation of the upper bound of CAM problem with at least $1-1/W^\ell$ probability.
\end{thm}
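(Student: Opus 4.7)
The plan is to mirror, step by step, the analysis already sketched for Theorem 12 (the lower-bound version), replacing RE-sampling with RN-sampling and $T$ with $W$. The central facts I will lean on are: (i) by Theorem 11, $\overline{\hat{f}_c}(\vec{x})=\frac{W}{\theta}\sum_{\nu\in N}\mathcal{H}(\nu)$ is an unbiased estimator of $\overline{f_c}(\vec{x})$; (ii) since $\overline{f_c}$ is monotone and DR-submodular (Theorem 8) and $\overline{\hat{f}_c}$ is a non-negative linear combination of coverage-type functions $\mathcal{H}(\nu)$ (each of which is monotone and DR-submodular in $\vec{x}$ because every $h_s(\cdot)$ is assumed monotone and DR-submodular, via Lemma 5 applied to a single coverage set), the estimator itself is monotone and DR-submodular; hence lattice-Greedy run on $\overline{\hat{f}_c}$ yields a $(1-1/e)$-approximate maximizer of $\overline{\hat{f}_c}$ over $\{\vec{x}\in\mathcal{X}:|\vec{x}|\le k\}$.

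Next I would port the martingale/Chernoff concentration step of IMM. For any fixed feasible $\vec{x}$, each RN-sample contributes an i.i.d.\ term in $[0,1]$ with mean $\overline{f_c}(\vec{x})/W$, so a Chernoff bound gives the standard two-sided deviation estimate between $\overline{\hat{f}_c}(\vec{x})$ and $\overline{f_c}(\vec{x})$. To lift this into a uniform bound over all feasible $\vec{x}$ I use a union bound: the number of feasible lattice points under $|\vec{x}|\le k$ with granularity $t$ is at most $\exp(\min(kt^{-1}\log d,\, d\log(kt^{-1})))$, which is exactly why this term replaces $\log\binom{n}{k}$ in the definitions of $\lambda'$, $\lambda^*$, $\alpha'$, and $\beta'$ in Equations (22)–(23). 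Plugging $W$ in place of $T$ and $\alpha',\beta'$ in place of $\alpha,\beta$, the parameter calculation from Tang et al.\ \cite{tang2015influence} goes through unchanged.

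I would then reproduce the two-phase correctness argument of sampling-UB (Algorithm 5). Phase one (the $\log_2 W$ doubling loop) shows that the quantity $LB$ produced is, with probability at least $1-1/W^\ell \cdot (\text{loop factor})$, a valid lower bound on $\max_{\vec{x}}\overline{f_c}(\vec{x})$ up to a $(1+\varepsilon')$ factor — this is the Chen workaround referenced right before Theorem 12 and applies verbatim once $n$ is replaced by $W$. Phase two then generates $\theta=\lambda^*/LB$ additional RN-samples, which by the choice of $\lambda^*=2W((1-1/e)\alpha'+\beta')^2/\varepsilon^2$ guarantees that, uniformly over the feasible lattice, $|\overline{\hat{f}_c}(\vec{x})-\overline{f_c}(\vec{x})|$ is small enough that a $(1-1/e)$-approximate maximizer of $\overline{\hat{f}_c}$ is a $(1-1/e-\varepsilon)$-approximate maximizer of $\overline{f_c}$. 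Combining with the approximation guarantee of lattice-Greedy from step one yields the theorem, with the failure probability absorbed into $1/W^\ell$ by the usual rebalancing of constants.

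The main obstacle I expect is the uniform-over-lattice concentration rather than the per-point bound: specifically, verifying that the $\min(kt^{-1}\log d, d\log(kt^{-1}))$ substitution really controls the cardinality of the feasible region for a union bound, and that the martingale stopping-time argument used by IMM in its doubling loop still applies when the elementary random variable is $\mathcal{H}(\nu)\in[0,1]$ rather than a $\{0,1\}$-valued RR-set indicator. Both are technical rather than conceptual — the boundedness of $\mathcal{H}(\nu)$ in $[0,1]$ is all that Hoeffding/Chernoff needs — but they are where the proof must actually be written out carefully rather than cited.
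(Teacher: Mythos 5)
Your proposal is correct and follows exactly the route the paper intends: the paper in fact states this theorem without writing out a proof, relying implicitly on the IMM analysis of Tang et al.\ with the three modifications it lists ($n\to W$, lattice-Greedy on RN-samplings, and $\log\binom{n}{k}\to\min(kt^{-1}\log d,\,d\log(kt^{-1}))$), which is precisely the argument you reconstruct, including the two points that genuinely need checking (the union bound over the feasible lattice and the $[0,1]$-valued rather than $\{0,1\}$-valued sample contributions). The only remark worth adding is that the theorem statement itself (like Theorem 12) should read ``upper bound'' consistently with Algorithm 6 --- your proof targets the right object.
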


\subsection{Sandwich Approximation Framework}
To optimize non-submodular function, there is no universal technique to approximate it within constant approximation ratio. Lu et al. \cite{lu2015competition} provided a sandwich approximation framework to us, where a data-dependent approximation ratio can be obtained by approximating the upper bound and lower bound that are monotone and submodular. It can be extended to solve our monotone but not DR-submodular objective function. First, we get a $(1-1/e-\varepsilon)$-approximate solution to the lower bound by calling IMM-LB, during that, we record the immediate collection of RE-sampling $M'$. Then, we use this $M'$ as the input of lattice-greedy to find a heuristic solution to the original problem. Finally, we get a  $(1-1/e-\varepsilon)$-approximate solution to the upper bound by calling IMM-UB and return the best one to the original problem. It is shown in Algorithm \ref{a7}.

\begin{thm}
	Let $\vec{x}_{sand}$ be the marketing strategy returned by Algorithm \ref{a7}, then we have $f_c(\vec{x}_{sand})\geq$
	\begin{equation}
	\max\left\{\frac{f_c(\vec{x}_U)}{\overline{f_c}(\vec{x}_U)},\frac{\underline{f_c}(\vec{x}_L^*)}{f_c(\vec{x}_A^*)}\right\}\frac{1-\gamma}{1+\gamma}\left(1-\frac{1}{e}-\varepsilon\right)f_c(\vec{x}_A^*)
	\end{equation}
	where $\vec{x}_L^*$ is the optimal solution to maximize the lower bound and $\vec{x}_A^*$ is the optimal solution of the CAM problem.
\end{thm}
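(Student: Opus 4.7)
The plan is to bound $f_c(\vec{x}_{sand})$ along two independent chains — one routed through the upper bound $\overline{f_c}$ and one through the lower bound $\underline{f_c}$ — and then take whichever chain gives the stronger inequality, since Algorithm \ref{a7} returns the empirical best among $\{\vec{x}_L,\vec{x}_A,\vec{x}_U\}$. The $\tfrac{1-\gamma}{1+\gamma}$ prefactor will come from the fact that ``best'' is measured with a Monte-Carlo estimator $\hat{f}_c$ rather than the true $f_c$: combining $(1-\gamma)f_c(\vec{x}) \leq \hat{f}_c(\vec{x}) \leq (1+\gamma)f_c(\vec{x})$ from Lemma 2 with the selection rule gives
\[ f_c(\vec{x}_{sand}) \;\geq\; \tfrac{1}{1+\gamma}\hat{f}_c(\vec{x}_{sand}) \;\geq\; \tfrac{1}{1+\gamma}\hat{f}_c(\vec{x}) \;\geq\; \tfrac{1-\gamma}{1+\gamma}\, f_c(\vec{x}) \]
for any $\vec{x} \in \{\vec{x}_L,\vec{x}_A,\vec{x}_U\}$, which is exactly the slack needed to absorb the estimator error in both remaining chains.

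For the upper-bound chain, let $\vec{x}_U^*$ denote the optimizer of $\overline{f_c}$ over the feasible lattice. I would write $f_c(\vec{x}_U) = \tfrac{f_c(\vec{x}_U)}{\overline{f_c}(\vec{x}_U)}\,\overline{f_c}(\vec{x}_U)$, apply Theorem 13 to get $\overline{f_c}(\vec{x}_U) \geq (1-\tfrac{1}{e}-\varepsilon)\overline{f_c}(\vec{x}_U^*)$, and then invoke the inequality $\overline{f_c}(\vec{x}_U^*) \geq \overline{f_c}(\vec{x}_A^*) \geq f_c(\vec{x}_A^*)$, where the first step uses optimality of $\vec{x}_U^*$ over $\overline{f_c}$ and the second uses the pointwise sandwich $\overline{f_c} \geq f_c$ established in Section V. Chaining these inequalities produces the first alternative in the max.

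For the lower-bound chain, I would use the pointwise lower-bound property to obtain $f_c(\vec{x}_L) \geq \underline{f_c}(\vec{x}_L)$, then Theorem 12 to get $\underline{f_c}(\vec{x}_L) \geq (1-\tfrac{1}{e}-\varepsilon)\underline{f_c}(\vec{x}_L^*)$, and finally multiply and divide by $f_c(\vec{x}_A^*)$ to expose the ratio $\underline{f_c}(\vec{x}_L^*)/f_c(\vec{x}_A^*)$. Combining both chains with the Monte-Carlo dominance inequality from the first paragraph and applying the dominance $f_c(\vec{x}_{sand}) \geq \tfrac{1-\gamma}{1+\gamma}\max\{f_c(\vec{x}_U),f_c(\vec{x}_L)\}$ yields the claimed bound.

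The only delicate point is the probabilistic bookkeeping: Theorems 12 and 13 hold with probability at least $1-1/T^\ell$ and $1-1/W^\ell$ respectively, and Lemma 2's estimator guarantee holds with confidence $1-\delta$. I would close the argument with a union bound over these three events so that the entire chain of inequalities holds simultaneously with probability $1-1/T^\ell-1/W^\ell-\delta$ (the theorem statement suppresses this into its implicit high-probability reading). The deterministic algebra is routine once each ingredient is invoked in the correct order; the real care is in verifying that the sample size used to compute $\hat{f}_c$ at the selection step is large enough to make $(\gamma,\delta)$ small while keeping the overall failure probability consistent with the guarantees inherited from IMM-LB and IMM-UB.
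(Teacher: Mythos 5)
Your proposal follows essentially the same route as the paper's proof: the identical two-chain decomposition through $\overline{f_c}$ (via $\vec{x}_U^*$ and the pointwise dominance $\overline{f_c}\geq f_c$) and through $\underline{f_c}$, combined with the $\tfrac{1-\gamma}{1+\gamma}$ factor arising from selecting $\vec{x}_{sand}$ by the $(\gamma,\delta)$-estimator rather than the true $f_c$. Your closing remark on the union bound over the failure events of Theorems 12, 13 and the estimator is a point the paper leaves implicit, and is a welcome bit of extra care, but it does not change the argument.
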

\begin{proof}
	Let $\vec{x}_U^*$ be the optimal solution to maximize the upper bound. For the upper bound, we have
	\begin{flalign}
		f_c(\vec{x}_U)&=\frac{f_c(\vec{x}_U)}{\overline{f_c}(\vec{x}_U)} \overline{f_c}(\vec{x}_U)\geq\frac{f_c(\vec{x}_U)}{\overline{f_c}(\vec{x}_U)}\left(1-\frac{1}{e}-\varepsilon\right)\overline{f_c}(\vec{x}_U^*)\nonumber\\
		&\geq\frac{f_c(\vec{x}_U)}{\overline{f_c}(\vec{x}_U)}\left(1-\frac{1}{e}-\varepsilon\right)\overline{f_c}(\vec{x}_A^*)\nonumber\\
		&\geq\frac{f_c(\vec{x}_U)}{\overline{f_c}(\vec{x}_U)}\left(1-\frac{1}{e}-\varepsilon\right)f_c(\vec{x}_A^*)\nonumber
	\end{flalign}
	For the lower bound, we have
	\begin{flalign}
		f_c(\vec{x}_L)&\geq\underline{f_c}(\vec{x}_L)\geq\left(1-\frac{1}{e}-\varepsilon\right)\underline{f_c}(\vec{x}_L^*)\nonumber\\
		&\geq\frac{\underline{f_c}(\vec{x}_L^*)}{f_c(\vec{x}_A^*)}\left(1-\frac{1}{e}-\varepsilon\right)f_c(\vec{x}_A^*)\nonumber
	\end{flalign}
	Let $\vec{x}_{max}=\arg\max_{\vec{x}\in\{\vec{x}_L,\vec{x}_A,\vec{x}_U\}}f_c(\vec{x})$, that is,
	\begin{equation*}
		f_c(\vec{x}_{max})\geq\max\left\{\frac{f_c(\vec{x}_U)}{\overline{f_c}(\vec{x}_U)},\frac{\underline{f_c}(\vec{x}_L^*)}{f_c(\vec{x}_A^*)}\right\}\left(1-\frac{1}{e}-\varepsilon\right)f_c(\vec{x}_A^*)
	\end{equation*}
	According to Theorem 2, $\dot{f}_c(\vec{x})$ is a $(\gamma,\delta)$-Estimation of $f_c(\vec{x})$ given a marketing strategy $\vec{x}$. Then, $\vec{x}_{sand}=\arg\max_{\vec{x}\in\{\vec{x}_L,\vec{x}_A,\vec{x}_U\}}\dot{f}_c(\vec{x})$, if $\vec{x}_{sand}\neq\vec{x}_{max}$, we have $(1+\gamma)f_c(\vec{x}_{sand})\geq(1-\gamma)f_c(\vec{x}_{max})$. Thus, the Inequality (24) is established.
\end{proof}

\begin{algorithm}[!t]
	\caption{\textbf{Sandwich Approximation Framework}}\label{a7}
	\begin{algorithmic}[1]
		\STATE $\vec{x}_L\leftarrow$ IMM-LB $(G,\underline{\hat{f}_c},\mathcal{X},k,\varepsilon,\ell)$ // Record the $M'$ returned by sampling-LB here
		\STATE $\vec{x}_A\leftarrow$ lattice-Greedy $(\hat{f}_c,M',\mathcal{X},k)$
		\STATE $\vec{x}_U\leftarrow$ IMM-UB $(G,\overline{\hat{f}_c},\mathcal{X},k,\varepsilon,\ell)$
		\STATE $\vec{x}_{sand}\leftarrow\arg\max_{\vec{x}\in\{\vec{x}_L,\vec{x}_A,\vec{x}_U\}}\dot{f_c}(\vec{x})$, where $\dot{f_c}(\vec{x})$ can be computed by $\dot{f}_d(\widetilde{V}-V|\widetilde{G})$ on constructed graph $\widetilde{G}$ equivalently, shown as Remark 2.
		\RETURN $\vec{x}_{sand}$
	\end{algorithmic}
\end{algorithm}

\section{Experiment}
In this section, we carry out several experiments on different datasets to validate the correctness and efficiency of our proposed algorithms. There are three datasets \cite{nr} used in our experiments: (1) Dataset-1: a co-authorship network, co-authorship among scientists to publish papers about network science; (2) Dataset-2: a Wiki network, who-voteson-whom network which come from the collection Wikipedia voting; (3) Dataset-3: A collaboration netwook extracted from Arxiv General Relativity. The statistics information of the three datasets is represented in table \ref{table_example}.

\begin{table}[h]
	\renewcommand{\arraystretch}{1.3}
	\caption{The statistics of three datasets}
	\label{table_example}
	\centering
	\begin{tabular}{|c|c|c|c|c|}
		\hline
		\bfseries Dataset & \bfseries n & \bfseries m & \bfseries Type & \bfseries Average degree\\
		\hline
		dataset-1 & 0.4K & 1.01K & directed & 4\\
		\hline
		dataset-2 & 1.0K & 3.15K & directed & 6\\
		\hline
		dataset-3 & 5.2K & 14.5K & directed & 5\\
		\hline
	\end{tabular}
\end{table}

\begin{figure}[!t]
	\centering
	\includegraphics[width=3.5in]{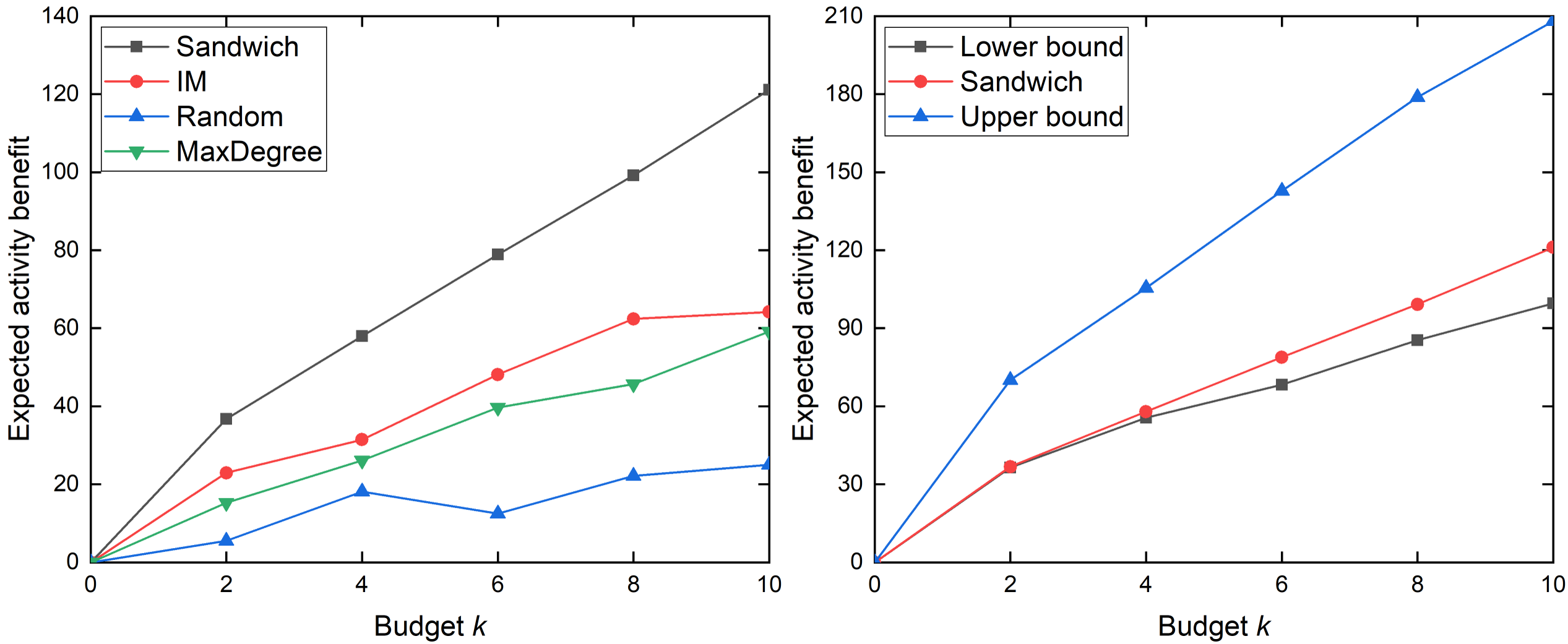}
	\\(a) Dataset-1
	\\${}$
	\includegraphics[width=3.5in]{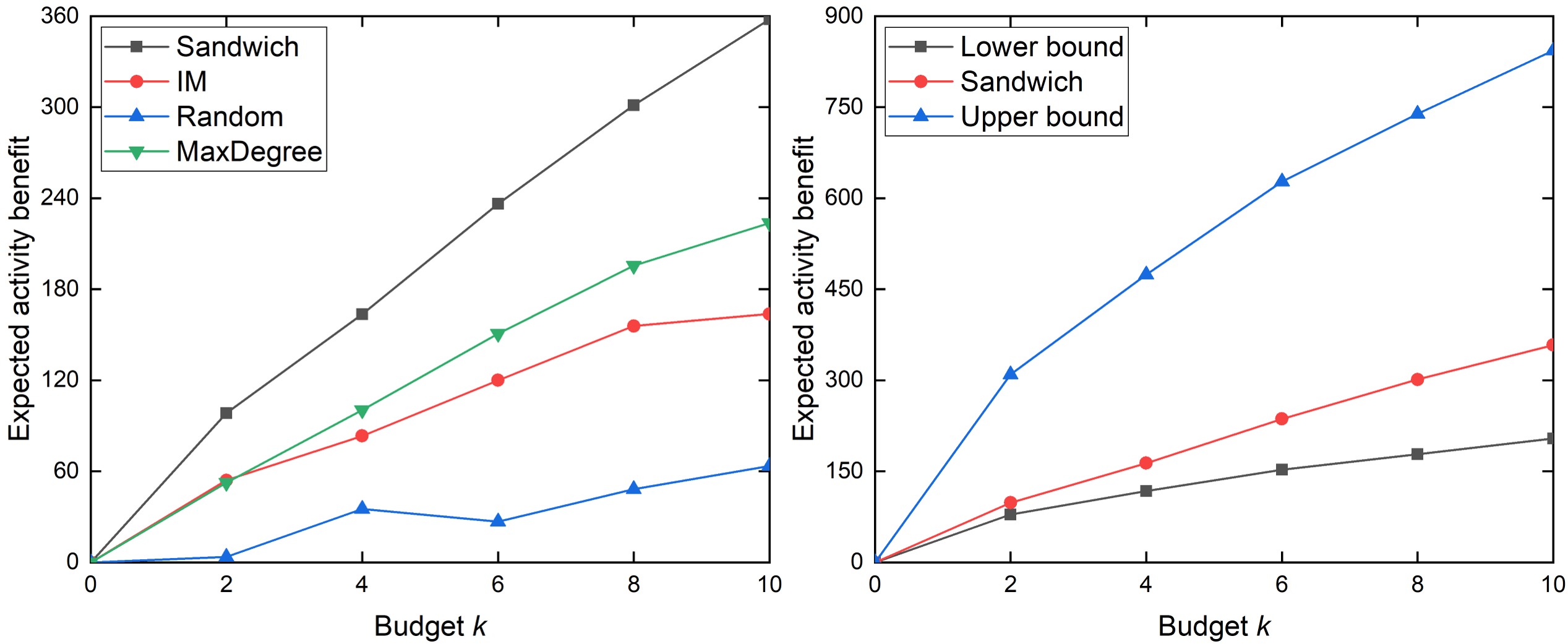}
	\\(b) Dataset-2
	\\${}$
	\includegraphics[width=3.5in]{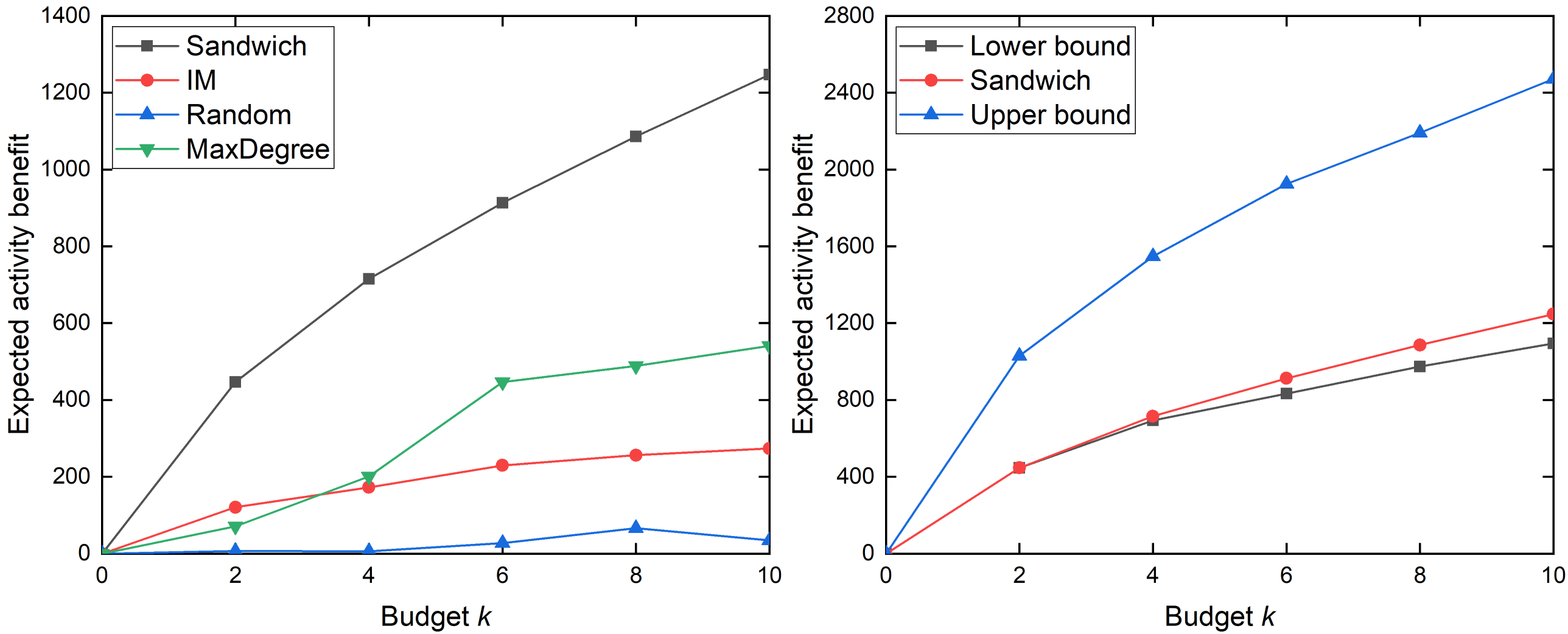}
	\\(c) Dataset-3
	\caption{Under the IC-model: left column is the performance comparison of different algorithms changes over budget $k$; right column is the result of sandwich approximation framework.}
	\label{fig1}
\end{figure}

\subsection{Experimental Settings}
The diffusion model of our proposed experiments relies on IC-model and LT-model. Under the IC-model, for each edge $(u,v)\in E$, the diffusion probability is set as $p_{uv}=1/|N^-(v)|$. Under the the LT-model, for each edge $e=(u,v)$, the weight is set as $b_{uv}=1/|N^-(v)|$. This setting is widely used by prior works about influence maximization. Given a marketing strategy $\vec{x}$, for each node $u\in V$, we have a strategy function $h_u(\vec{x})$. Here, we consider the case: independent strategy activation \cite{chen2018scalable}, where each component $x_j\in\vec{x}$ attempts to activate $u$ as seed independently. Then, we have
\begin{equation}
h_u(\vec{x})=1-\prod_{j\in[d]}(1-q_{uj}(x_j))
\end{equation}
where strategy $j\in[d]$ activate $u$ as seed with probability $q_{uj}(x_j)$. Chen et al. \cite{chen2018scalable} pointed out $h_u(\vec{x})$ is monotone and DR-submodular if $q_{uj}(x_j)$ is monotone and concave for each $j\in[d]$ and each node $u\in V$. In this experiment, we test personalized marketing scenario \cite{yang2016continuous}, where strategy function is defined as $h_u(\vec{x})=2x_u-x_u^2$ and $\vec{x}=(x_1,x_2,\cdots,x_n)$. It means that the probability that activates node $u$ as seed only depends on component $x_u$.

For our sandwich approximation framework, we set parameters of accuracy $\varepsilon=0.1$, confidence $\ell=1$ and granularity $t=0.2$. Besides, we set activity strength $A_{uv}=1$ for each edge $(u,v)\in E$ and Monte Carlo simulation $r=2000$. Then, we compare it with some commonly used baseline algorithms, which is summarized as follows: (1) IM: It returns the active nodes by lattice greedy algorithm to maximizing the influence spread, and then computes the activity benefit. (2) MaxDegree: It selects the node with the highest outdegree under the budget $k$. (3) Random: It selects a node $u$ randomly and increases its $x_u$ by $t$ until using up the budget $k$. 

\begin{figure}[!t]
	\centering
	\includegraphics[width=3.5in]{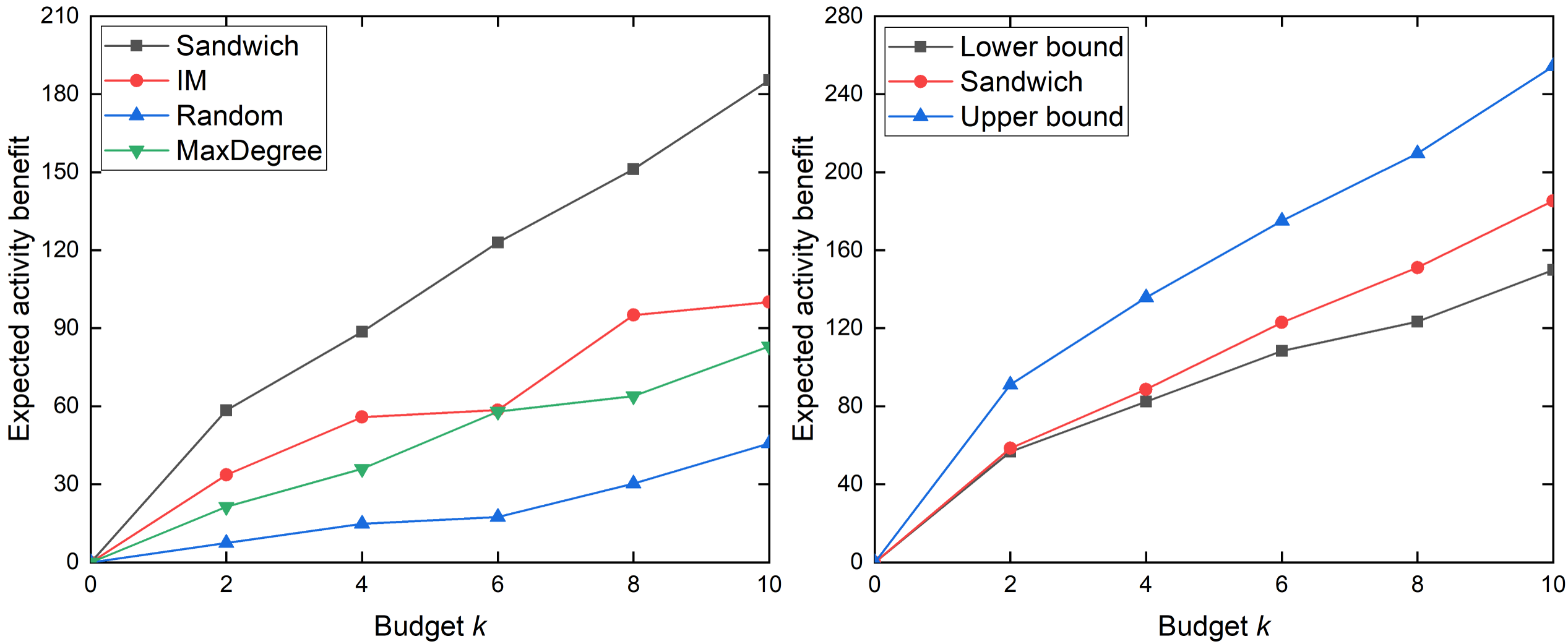}
	\\(a) Dataset-1
	\\${}$
	\includegraphics[width=3.5in]{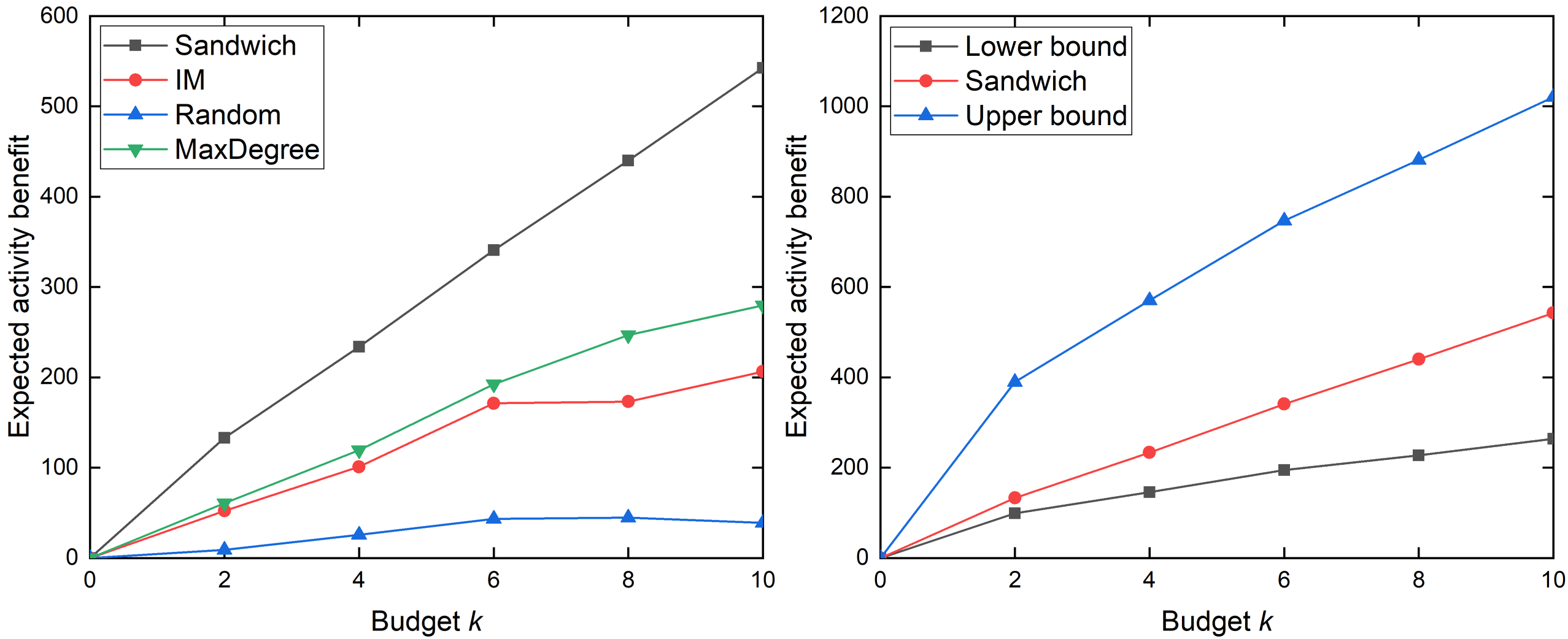}
	\\(b) Dataset-2
	\\${}$
	\includegraphics[width=3.5in]{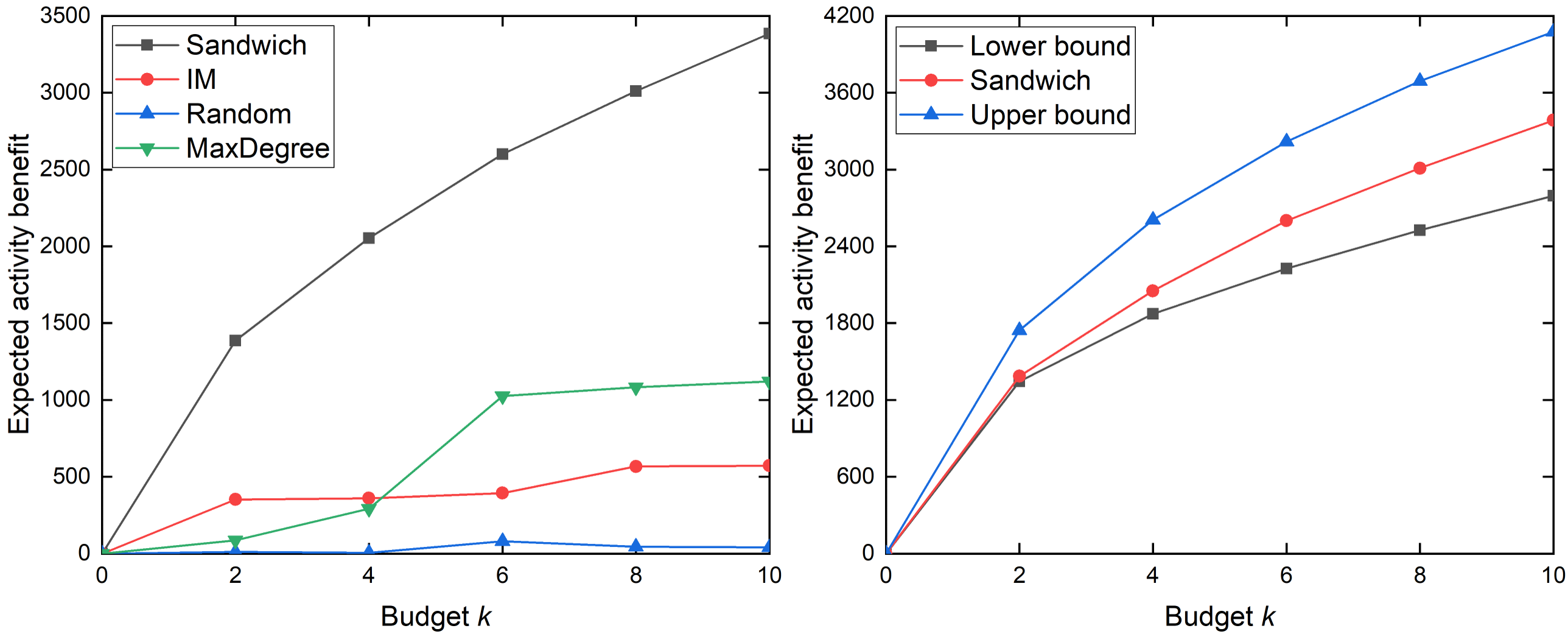}
	\\(c) Dataset-3
	\caption{Under the LT-model: left column is the performance comparison of different algorithms changes over budget $k$; right column is the result of sandwich approximation framework.}
	\label{fig2}
\end{figure}

\subsection{Experimental Results}
Fig. \ref{fig1} and Fig. \ref{fig2} draw the performance achieved by our sandwich method under the IC-model and LT-model, Algorithm \ref{a7}, and other heuristic algorithms. Theoretically, our sandwich method can guarantee an approximate bound, but others can not. From the left column of Fig. \ref{fig1} and Fig. \ref{fig2}, the total activity benefit returned by our sandwich method is always the best among all results returned by other algorithms. With the increasing size of dataset, the advantage of sandwich is more apparent. For IM and MaxDegree, which one is better? The answer is uncertain. For the dataset-1, IM is better than MaxDegree under the IC-model and LT-model. But for the dataset-2 and dataset-3, MaxDegree is better than IM. From the right column of Fig. \ref{fig1} and Fig. \ref{fig2}, it is observed that the expected activity benefit returned by sandwich approximation framework lies in between its upper bound and lower bound. Unitil now, the correctness and effectiveness of our algorithms have been tested and validated.

\section{Conclusion}
In this paper, we proposed the CAM problem to adapt to real scenario, general marketing strategy. It can be considered as maximization problem on lattice. We proved the hardness and gave a computing method for the objective function of CAM. This objective function is monotone but not DR-submodular and not DR-supermodular. We designed the unbiased sampling for it, its upper bound and lower bound. Adapted from IMM algorithm and sandwich approximation framework, a data-dependent approximation ratio can be obtained. The performance of the proposed algorithms is verified by experiments. The analysis of CAM problem is applicable to others which is a branch of maximization problem on lattice.

\section*{Acknowledgment}

This work is partly supported by National Science Foundation under grant 1747818.

\ifCLASSOPTIONcaptionsoff
  \newpage
\fi



%

\bibliographystyle{IEEEtran}
\bibliography{references}

%




\end{document}